\newtheorem{thm}{Theorem}
\newtheorem{lem}{Lemma}
\newtheorem{cor}{Corollary}
\newtheorem{prop}{Proposition}
\newtheorem{defn}{Definition}
\newtheorem{example}{Example}
\newtheorem{remark}{Remark}
\newcommand{\F}{\mathbb{F}}
\newcommand{\ftwon}{{\mathbb F}_{2^n}}
\newcommand{\ftwom}{{\mathbb F}_{2^m}}
\newcommand{\bv}{\overline{v}}
\begin{document}
\title{Construction of $(n,n)$-functions with low differential-linear uniformity}
\author{Xi Xie, Nian Li, Qiang Wang, Xiangyong Zeng, Yinglong Du
\thanks{X. Xie, N. Li and Y. Du are with the Key Laboratory of Intelligent Sensing System and Security (Hubei University), Ministry of Education, the Hubei Provincial Engineering Research Center of Intelligent Connected Vehicle Network Security, and School of Cyber Science and Technology, Hubei University, Wuhan 430062, China.
Q. Wang is with the School of Mathematics and Statistics, Carleton University, Ottawa, K1S 5B6, Canada.
X. Zeng is with the Key Laboratory of Intelligent Sensing System and Security (Hubei University), Ministry of Education, Hubei Key Laboratory of Applied Mathematics, and Faculty of Mathematics and Statistics, Hubei University, Wuhan 430062, China.
Email: xi.xie@aliyun.com, nian.li@hubu.edu.cn, wang@math.carleton.ca, xiangyongzeng@aliyun.com, YingLong.Du@aliyun.com.
}
}
	\date{}
	\maketitle
\begin{quote}
{{\bf Abstract:}
	The differential-linear connectivity table (DLCT), introduced by Bar-On et al. at EUROCRYPT'19, is a novel tool that captures the dependency between the two subciphers involved in differential-linear attacks.
	This paper is devoted to exploring the differential-linear properties of $(n,n)$-functions. First, by refining specific exponential sums, we propose two classes of power functions over $\F_{2^n}$ with low differential-linear uniformity (DLU).
	Next, we further investigate the differential-linear properties of $(n,n)$-functions that are polynomials by utilizing power functions with known DLU.
	Specifically, by combining a cubic function with quadratic functions, and employing generalized cyclotomic mappings, we construct several classes of $(n,n)$-functions with low DLU, including some that achieve optimal or near-optimal DLU compared to existing results.
		}	
	
{ {\bf Keywords:}} $(n,n)$-function, Power function, Differential-linear connectivity table, Differential-linear uniformity, Kloosterman sum, Cyclotomic mapping.
	\end{quote}

\section{Introduction}
The substitution box (S-box) is mathematically defined as a vectorial Boolean function from the vector space $\F_{2}^n$ to $\F_{2}^m$, also referred to as an $(n,m)$-function, where $n$ and $m$ are positive integers. Since the vector space $\F_{2}^n$ is isomorphic to the finite field $\F_{2^n}$, such a function can also be viewed as a mapping from $\F_{2^n}$ to $\F_{2^m}$. The security of a cipher heavily depends on the cryptographic properties of its S-box, as it is typically the only nonlinear component in most modern block ciphers. Therefore, an S-box used in cryptography should possess good properties to resist various attacks.

Differential \cite{BS} and linear cryptanalysis \cite{Matsui} are powerful techniques for assessing the security of block ciphers. In 1994, Langford and Hellman published a combination of differential and linear cryptanalysis under two default independence assumptions, known as differential-linear cryptanalysis \cite{LH}.
To delve deeper into the resistance of S-boxes against differential-linear attacks, Bar-On et al. \cite{Bar-On} introduced a new tool called the differential-linear connectivity table ({\rm DLCT}) in EUROCRYPT'19. This tool allows for precise analysis of the probabilities of the connection between the differential and linear parts in a differential-linear attack. Later, Li et al. \cite{LLLQ} and Canteaut, K\"{o}lsch, and Wiemer \cite{Canteaut1} independently provided theoretical characterizations of the {\rm DLCT} for cryptographic functions, and their results were subsequently unified in \cite{Canteaut2}.
Specifically, they observed that the DLCT aligns with the autocorrelation of vectorial Boolean functions and analyzed the differential-linear uniformity (DLU) of several classes of functions, some of which were derived from known autocorrelation results \cite{Charpin,Gong,Carlet,SW,CanteautC}. Building on this foundation, subsequent researchers explored the constructions of balanced vectorial Boolean functions with low DLU \cite{Tang1,Tang2}. 
Subsequently, Jeong, Koo, and Kwon \cite{Jeong} demonstrated that two classes of differentially 4-uniform permutations, constructed from the inverse function composed by disjoint cycles, exhibit low DLU.
To date, only a few functions with known DLU have been identified, most of which are $(n,n)$-functions. In this context, $(n,n)$-functions refer to mappings  from $\F_{2^n}$ to  $\F_{2^n}$.
Table \ref{DLU-table} summarizes the infinite families of $(n,n)$-functions with known DLU, while further studies on $(n,m)$-functions can be found in \cite{Tang1,Tang2}.

\begin{table}[!htb]\footnotesize
\begin{threeparttable}
	\caption{The $(n,n)$-functions $F(x)$ with known {\rm DLU}}  \label{DLU-table}
	\renewcommand\arraystretch{1.2}
	\setlength\tabcolsep{2pt}
	\centering
	\begin{tabular}{|c|c|c|c|c|}
		\hline No.&  $F(x)$                & Condition                   &  ${\rm DLU}_{F}$    & Refs.                  \\ \hline\hline
		1  &  $x^{2^n-2}$            &    $n$ even           &    $2^{n/2}$       &\cite{Charpin}            \\ \hline
		2  &  $x^{2^{2k}-2^k+1}$     & $3k\equiv \pm 1({\rm mod}\, n)$, $n$ odd    &    $2^{(n-1)/2}$        &\cite{Gong}            \\ \hline
		3  &  $x^{2^{(n+1)/2}+3}$     &  $n$ odd    &    $2^{(n-1)/2}$ or  $2^{(n+1)/2}$      &\cite{Carlet}            \\ \hline
		4  &  $x^{2^{(n-1)/2}+3}$            &$n$ odd                   & ${\rm DLU}_{F}\leq2^{(n+3)/2}$    &\cite{Carlet}            \\ \hline
		5  &  $x^{2^{m+1}+3}$            &$n=2m$                    & ${\rm DLU}_{F}\leq2^{3m/2}$    &\cite{SW}            \\ \hline
		6  & $x^{2^m+2^{(m+1)/2}+1}$      &$n=2m$, $m$ odd  & ${\rm DLU}_{F}\leq2^{3m/2}$              &\cite{SW}       \\ \hline
		7  & $x^{2^{2k}+2^k+1}$             &$n=6k$                    & $2^{5k-1}$         &\cite{CanteautC}             \\ \hline
		8  &$x^{2^{2k}+2^k+1}$            &$n=4k$    & $2^{3k-1}$   &\cite{Canteaut2}  \\ \hline
		9  & $\sum_{0\leq i<j\leq n-1} a_{ij}x^{2^i+2^j}$             & $n>0$  & $2^{n-1}$  &\cite{Canteaut2}             \\ \hline
10  & $Inv\circ (0,1)$             & $n\geq 4$ even  & $2^{n/2}$  &\cite{Jeong}             \\ \hline
11 & $Inv\circ (0,1)(\alpha,\beta)$    & $n$ even  & $2^{n/2}+4$  &\cite{Jeong}             \\ \hline
		12  & $x^{2^{2k}+2^{k}+1}$ & $\gcd(k,\,n)=e$                  & ${\rm DLU}_{F}\!\leq\!2^{(n+3e)/2-1}$\!\,or\,$2^{(n+4e)/2-1}$     & This paper            \\ \hline
		13  & $x^{l(2^m-1)}$ &$n\!=\!2m,\gcd(l,2^m\!+\!1)\!=\!1$        & $\frac{K_{\max}(m)^2}{2}$\!\,or\,$\frac{K_{\max}(m)^2}{2}\!+\!2K_{\max}(m)$          & This paper             \\ \hline
		14 & $x^{2^{2k}+2^k+1}+\!\!\!\!\sum\limits_{0\leq i<j\leq n-1}\!\!\!\! a_{ij}x^{2^i+2^j}$         & $\gcd(k,\,n)=e$                  & ${\rm DLU}_{F}\!\leq\!2^{(n+3e)/2-1}$\!\,or\,$2^{(n+4e)/2-1}$ &   This paper   \\\hline
		15 & $x^{2^n-2}$ if $x\in \F_{2^n}\backslash \{\xi\}$; $a$ if $x\!=\!\xi$
		 & $n$ even &  $2^{n/2}$\,or\,$2^{n/2}+2$ & This paper \\\hline
		16 & $x^{2^{2k}\!-2^k+1}$ if $x\!\in\! \F_{2^n}\!\!\backslash\! \{\xi\}$; $a$ if $x\!=\!\xi$
		 & $n$ odd & ${\rm DLU}_{F}\leq2^{(n-1)/2}+2$ & This paper \\\hline
	\end{tabular}
-where $K_{\max}(m)$ is given by \eqref{Kmax}, $Inv$ denotes the inverse function, and $(0,1)$, $(\alpha,\beta)$ are transpositions over $\F_{2^n}$.
\end{threeparttable}
\end{table}
This paper aims to construct new infinite families of $(n,n)$-functions with low {\rm DLU}.
Power functions are preferred candidates for S-boxes due to their simple algebraic forms and typically lower hardware implementation costs.
Most existing research on the {\rm DLU} of $(n,n)$-functions has focused on power functions.
As noted by Canteaut et al. \cite{Canteaut2}, the {\rm DLU} of an $(n,n)$-function is greater than $2^{n/2-1}$, and this bound can be further refined based on experimental results from Magma \cite{BCP}. Among the known results, the inverse function and Kasami APN permutation achieve the lowest {\rm DLU} for even and odd $n$, respectively, serving as benchmarks for evaluating the {\rm DLU} of new constructions.
In the first part of our study, we examine the {\rm DLU} of two classes of power functions that exhibit relatively low {\rm DLU} in the experimental data. Through refined manipulations of specific exponential sums, we derive upper bounds for their {\rm DLU}, leading to several classes of functions with low {\rm DLU}.
Building upon this, we extend our investigation to the {\rm DLU} of several classes of polynomials over $\F_{2^n}$. First, we propose a class of polynomials with lower {\rm DLU} by combining a cubic function and general quadratic functions.
Second, we investigate the {\rm DLU} of a class of generalized cyclotomic mappings, which essentially involve modifying the values of a power function at specific points or sets. By utilizing the inverse function and the Kasami APN permutation, we construct several classes of polynomials whose {\rm DLU} is either equal to or very close to theirs.

The rest of this paper is organized as follows. Section \ref{prel} introduces the preliminaries. Section \ref{cons1} investigates the {\rm DLU} of two classes of power functions and gives upper bounds on their {\rm DLU}. Section \ref{cons2} constructs two classes of $(n,n)$-functions with low {\rm DLU}, including polynomials achieving optimal or near-optimal {\rm DLU} compared to known results.
Section \ref{conc} concludes this study.

\section{Preliminaries}\label{prel}
Throughout this paper, $\#E$ denotes the cardinality of a finite set $E$. In addition, let $n$ be a positive integer and $\mathbb F_{2^n}$ be the finite field of $2^n$ elements. We denote by $\mathbb F_{2^n}^*$ the multiplicative cyclic group of non-zero elements of $\mathbb F_{2^n}$. The (absolute) trace function ${\rm Tr}_1^n:\mathbb F_{2^n}\longrightarrow \mathbb F_2$ is defined by ${\rm Tr}_1^n(x)=\sum_{i=0}^{n-1} x^{2^{i}}$ for all $x\in\mathbb F_{2^n}$.

\subsection{Exponential sums}

For each $b\in\mathbb{F}_{2^n}$, the function $\chi_b(x)=(-1)^{{\rm Tr}_1^n(bx)}$ defines an additive character
for $x\in\mathbb{F}_{2^n}$. The character $\chi_1$ is referred to as the canonical additive character of $\mathbb{F}_{2^n}$. For simplicity, we denote the canonical additive character of the prime field $\mathbb{F}_{2}$ by $\chi(x)=(-1)^{x}$ for $x\in\mathbb{F}_{2}$.

Below, we recall the classical binary Kloosterman sums and some results that are used in this paper. For any $\gamma\in\F_{2^n}$, the Kloosterman sum $K_{n}(\gamma)$ over $\F_{2^n}$ is defined as
\[K_{n}(\gamma)=\sum_{x\in\F_{2^n}}\chi({\rm Tr}_1^n(\gamma x+ x^{-1})).\]
The value of $K_{n}(\gamma)$ has been characterized as follows.
\begin{lem}{\rm(\cite{LW})}\label{lem.Kn1}
	Let $n\geq 3$ be a positive integer. Then for any integer $s\equiv 0 \, ({\rm mod} \, 4)$ in the range
	\[ [1-2^{(n+2)/2},\,1+2^{(n+2)/2}],\]
	there is an element $\gamma\in\ftwon$ such that $K_{n}(\gamma)=s$.
\end{lem}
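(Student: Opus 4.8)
The plan is to reinterpret $K_n(\gamma)$ as a shift of the Frobenius trace of an elliptic curve and to read off both the range and the attained values from the arithmetic of these curves. For $\gamma\neq 0$ I would work with the family
\[ E_\gamma:\ V^2 + XV = X^3 + \gamma X \qquad\text{over } \ftwon. \]
Counting affine points by the Artin--Schreier rule (the equation $w^2+w=u$ has $1+(-1)^{{\rm Tr}_1^n(u)}$ solutions): the point $X=0$ contributes only $(0,0)$, while for $X\neq 0$ dividing by $X^2$ and substituting $X\mapsto X^{-1}$ produces exactly the character sum $\sum_{x\neq 0}(-1)^{{\rm Tr}_1^n(\gamma x + x^{-1})}=K_n(\gamma)-1$. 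Adding $(0,0)$ and the point at infinity gives $\#E_\gamma(\ftwon)=2^n+K_n(\gamma)$. Writing $\#E_\gamma(\ftwon)=2^n+1-t_\gamma$ for the Frobenius trace $t_\gamma$ yields $K_n(\gamma)=1-t_\gamma$, and Hasse's bound $|t_\gamma|\le 2\sqrt{2^n}=2^{(n+2)/2}$ immediately confines $K_n(\gamma)$ to the interval $[\,1-2^{(n+2)/2},\,1+2^{(n+2)/2}\,]$.

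For the congruence $K_n(\gamma)\equiv 0\ ({\rm mod}\ 4)$ I would avoid any $2$-descent on the curve and argue directly. Writing $K_n(\gamma)=2^n-2N_1$ with $N_1=\#\{x\in\ftwon:{\rm Tr}_1^n(\gamma x+x^{-1})=1\}$, it suffices (as $4\mid 2^n$ for $n\ge 2$) to show $N_1$ is even. The map $\iota(x)=(\gamma x)^{-1}$ is an involution of $\ftwon^*$ that fixes the quantity $\gamma x+x^{-1}$, hence preserves its trace; its unique fixed point $x=\gamma^{-1/2}$ satisfies $\gamma x+x^{-1}=0$ and so lies outside the set counted by $N_1$. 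Thus $\iota$ acts freely on that set, forcing $N_1$ even, whence $K_n(\gamma)\equiv 0\ ({\rm mod}\ 4)$, equivalently $t_\gamma\equiv 1\ ({\rm mod}\ 4)$.

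The substantive part is surjectivity: every $s\equiv 0\ ({\rm mod}\ 4)$ in the range must equal some $K_n(\gamma)$, i.e.\ every $t\equiv 1\ ({\rm mod}\ 4)$ with $|t|<2^{(n+2)/2}$ must occur as a trace $t_\gamma$. Here I would combine two facts. First, $j(E_\gamma)=\gamma^{-2}$, so as $\gamma$ ranges over $\ftwon^*$ the curves $E_\gamma$ realize every nonzero $j$-invariant, that is, every ordinary $j$-invariant over $\ftwon$. Second, Waterhouse's (Deuring's) theorem guarantees that every odd integer $t$ with $|t|<2\sqrt{2^n}$ is the Frobenius trace of some ordinary $E/\ftwon$. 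Given a target $t\equiv 1\ ({\rm mod}\ 4)$, choose such a curve $E$; since $t$ is odd, $E$ is ordinary and $j(E)\neq 0$, so over $\ftwon$ there are exactly two curves of this $j$-invariant, namely $E$ and its quadratic twist, with traces $t$ and $-t$. Picking $\gamma$ with $j(E_\gamma)=j(E)$ makes $E_\gamma$ one of these two, and because the second step forces $t_\gamma\equiv 1\ ({\rm mod}\ 4)$ while $-t\equiv 3\ ({\rm mod}\ 4)$, we must have $t_\gamma=t$, i.e.\ $K_n(\gamma)=1-t=s$.

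The main obstacle is exactly this last step: the existence statement rests on the deep input that all admissible (odd) traces are realized by ordinary curves over $\ftwon$ (Waterhouse/Deuring), together with the twist bookkeeping that pins down which of the two quadratic twists of a given $j$-invariant falls into the Kloosterman family. The range and the divisibility are comparatively routine, flowing from Hasse's inequality and the fixed-point-free involution; the delicate point is ensuring that the congruence $t_\gamma\equiv 1\ ({\rm mod}\ 4)$ selects, for each prescribed $j$-invariant, precisely the twist whose trace equals the desired $t$, so that no admissible value $s\equiv 0\ ({\rm mod}\ 4)$ in the interval is omitted.
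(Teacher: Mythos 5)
The paper offers no proof of this lemma --- it is quoted verbatim from Lachaud--Wolfmann \cite{LW} --- and your argument is a correct reconstruction of exactly their proof: the identification $\#E_\gamma(\F_{2^n})=2^n+K_n(\gamma)$ with $j(E_\gamma)=\gamma^{-2}$, the Hasse bound for the range, the fixed-point-free involution for $K_n(\gamma)\equiv 0\ ({\rm mod}\ 4)$, and Deuring/Waterhouse plus the twist dichotomy (with the congruence mod $4$ selecting the correct twist) for surjectivity. No gaps; note only that the interval endpoints $1\pm 2^{(n+2)/2}$ are never $\equiv 0\ ({\rm mod}\ 4)$ for $n\geq 3$, so your restriction to odd traces $|t|<2\sqrt{2^n}$ loses nothing.
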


\begin{lem}{\rm(\cite{Charpin,Helleseth,Lis})}\label{lem.Kn2}
	Let $n\geq 3$ be a positive integer and $\gamma\in\ftwon$. Then  $K_{n}(\gamma)\equiv 0\, ({\rm mod}\, 8)$ if ${\rm Tr}_1^n(\gamma)=0$ and $K_{n}(\gamma)\equiv 4\, ({\rm mod}\, 8)$ if ${\rm  Tr}_1^n(\gamma)=1$.
\end{lem}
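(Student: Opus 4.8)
The plan is to convert the character sum into a point count and read off its class modulo $8$ from divisibility properties of an associated elliptic curve. The case $\gamma=0$ is immediate, since $K_n(0)=1+\sum_{x\in\ftwon^*}\chi({\rm Tr}_1^n(x^{-1}))=1+(-1)=0\equiv 0\pmod 8$, matching ${\rm Tr}_1^n(0)=0$; so assume $\gamma\neq 0$. Writing $N=\#\{x\in\ftwon:\ {\rm Tr}_1^n(\gamma x+x^{-1})=0\}$ (with the convention $0^{-1}=0$) and splitting $\ftwon$ according to the value of the trace gives $K_n(\gamma)=N-(2^n-N)=2N-2^n$. Since $n\geq 3$ forces $8\mid 2^n$, we get $K_n(\gamma)\equiv 2N\pmod 8$, so the entire statement reduces to proving $N\equiv 2\,{\rm Tr}_1^n(\gamma)\pmod 4$.

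For the first (modulo $4$) level I would exploit the symmetry of $g(x)=\gamma x+x^{-1}$. The map $\iota(x)=(\gamma x)^{-1}$ is an involution of $\ftwon^*$ with $g(\iota(x))=g(x)$, so it preserves both the trace-$0$ and the trace-$1$ sets. Its only fixed point is $x_0=\gamma^{-1/2}$ (the unique square root), at which $g(x_0)=\gamma^{1/2}+\gamma^{1/2}=0$, placing $x_0$ in the trace-$0$ set. Pairing the remaining nonzero elements into $2$-cycles, and adjoining $x=0$ (also trace $0$), shows $N$ is even, hence $K_n(\gamma)\equiv 0\pmod 4$. Equivalently, lifting to $y^2+y=\gamma x+x^{-1}$, the commuting involutions $(x,y)\mapsto(x,y+1)$ and $(x,y)\mapsto(\iota(x),y)$ generate a $(\mathbb{Z}/2\mathbb{Z})^2$-action having a single orbit of size $2$ (over $x_0$) and all other orbits of size $4$, which makes the residue of $N$ modulo $4$ equal to $2$ plus twice the number of size-$4$ orbits.

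The refinement to modulo $8$ is the crux, and the involution above is not enough: it pins down $N$ only modulo $2$. To recover the dependence on ${\rm Tr}_1^n(\gamma)$ I would pass to the projective model, a Weierstrass curve $E_\gamma:\ Y^2+XY=X^3+\gamma$ over $\ftwon$, for which the Lachaud--Wolfmann identity (see \cite{LW}) gives $\#E_\gamma(\ftwon)=2^n+K_n(\gamma)$; as $8\mid 2^n$, this yields $\#E_\gamma(\ftwon)\equiv K_n(\gamma)\pmod 8$. The point $(0,\gamma^{1/2})$ is a rational point of order $2$ (it is rational because squaring is bijective in characteristic $2$), so $2\mid\#E_\gamma(\ftwon)$, and combined with the previous step $4\mid\#E_\gamma(\ftwon)$. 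The remaining factor of $2$ is governed by whether this $2$-torsion point lies in $2E_\gamma(\ftwon)$: by the halving (complete $2$-descent) criterion in characteristic $2$, divisibility by $2$ of a point is equivalent to an explicit additive condition on its coordinates, and for $(0,\gamma^{1/2})$ that condition simplifies exactly to ${\rm Tr}_1^n(\gamma)=0$. Therefore $8\mid\#E_\gamma(\ftwon)$ iff ${\rm Tr}_1^n(\gamma)=0$, giving $K_n(\gamma)\equiv 0\pmod 8$ in that case and $K_n(\gamma)\equiv 4\pmod 8$ otherwise.

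I expect the halving criterion to be the delicate point: one must set up the $2$-descent map for $E_\gamma$ in characteristic $2$ correctly and verify that the resulting trace condition is precisely ${\rm Tr}_1^n(\gamma)$ rather than a trace of some transformed quantity. A more self-contained alternative, avoiding elliptic curves, would be to produce a second symmetry (for instance a Frobenius-type or $x\mapsto x^{-1}$ induced map) acting on the set of size-$4$ orbits found above and to compute the parity of its fixed orbits; the obstacle there is the same, namely isolating the exact factor ${\rm Tr}_1^n(\gamma)$, which is the content established in \cite{Charpin,Helleseth,Lis}.
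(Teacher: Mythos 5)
The paper does not prove this lemma: it is imported verbatim from \cite{Charpin,Helleseth,Lis}, so there is no internal proof to compare against, and your sketch has to be judged against the arguments in those references. Your overall strategy --- reduce the claim to $N\equiv 2\,{\rm Tr}_1^n(\gamma)\ ({\rm mod}\ 4)$, get $4\mid K_n(\gamma)$ from the involution $x\mapsto(\gamma x)^{-1}$, and extract the last factor of $2$ from the $2$-power structure of $E_\gamma(\F_{2^n})$ via the Lachaud--Wolfmann point count --- is exactly the elliptic-curve route of \cite{Lis}, and the first two steps (including the treatment of $\gamma=0$ and of the fixed point $\gamma^{-1/2}$) are correct.

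The crux step, however, is wrong as stated. On $E:Y^2+XY=X^3+a_6$ the unique point of order two is $T=(0,a_6^{1/2})$, and the doubling formula gives $x(2P)=\lambda^2+\lambda$ with $\lambda=x_P+y_P/x_P$; solving $2P=T$ forces $\lambda\in\{0,1\}$ and $x_P^4=a_6$, so $P=(a_6^{1/4},\,a_6^{1/2}+\lambda a_6^{1/4})$ is \emph{always} rational. Hence $T\in 2E(\F_{2^n})$ unconditionally: the halving obstruction for $T$ is ${\rm Tr}_1^n(x(T))={\rm Tr}_1^n(0)=0$, which is vacuous and does not ``simplify to ${\rm Tr}_1^n(\gamma)=0$'' as you assert. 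That step only re-proves $4\mid\#E(\F_{2^n})$, which you already had. The factor of $8$ is governed by whether the order-four point $P$ is itself halvable: since the curve is ordinary its $2$-Sylow subgroup is cyclic (a fact your argument needs and does not state), so $8\mid\#E(\F_{2^n})$ iff $P\in 2E(\F_{2^n})$, and the first obstruction to halving $P$ is ${\rm Tr}_1^n(x(P))={\rm Tr}_1^n(a_6^{1/4})={\rm Tr}_1^n(a_6)$ --- this is where the trace of $\gamma$ actually enters (with the normalization $a_6=\gamma^2$, which is also the correct one: for $Y^2+XY=X^3+a_6$ one gets $\#E=2^n+K_n(a_6^{1/2})$, not $2^n+K_n(a_6)$, though this only shifts $\gamma$ to $\gamma^2$ and leaves the trace unchanged). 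Finally, even after fixing which point is being halved, you have only a necessary trace condition; one must still check that the remaining equations of the $2$-descent for $P$ are solvable over $\F_{2^n}$ whenever ${\rm Tr}_1^n(\gamma)=0$, and that verification --- absent from your sketch --- is the real content of the cited results.
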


Furthermore, we state the maximum and minimum values of a Kloosterman sum as follows.
\begin{lem}\label{lem.Kn1-2}
	Let $K_{\max}(n)$ and $K_{\min}(n)$ denote the maximum and minimum values of $K_n(\gamma)$ as $\gamma$ varies over $\ftwon$. Then
{\begin{equation}\label{Kmax}
		K_{\max}(n)=\left \{\begin{array}{lll}
			\lfloor2^{n/2+1}\rfloor+1, &{\rm if}\,\, \lfloor2^{n/2+1}\rfloor\equiv 3\,({\rm mod}\, 4);\\[0.05in]
			\lfloor2^{n/2+1}\rfloor-j, &{\rm if} \,\, \lfloor2^{n/2+1}\rfloor\equiv j\,({\rm mod}\, 4),\,\,j=0,1,2,
		\end{array}\right .
	\end{equation}
and
	\begin{equation*}
		K_{\min}(n)=\left \{\begin{array}{lll}
			4-\lfloor2^{n/2+1}\rfloor, &{\rm if}\,\, \lfloor2^{n/2+1}\rfloor\equiv 0\,({\rm mod}\, 4);\\[0.05in]
			j-\lfloor2^{n/2+1}\rfloor, &{\rm if} \,\, \lfloor2^{n/2+1}\rfloor\equiv j\,({\rm mod}\, 4),\,\,j=1,2,3.
		\end{array}\right .
	\end{equation*}}
\end{lem}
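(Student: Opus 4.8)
The plan is to identify $K_{\max}(n)$ and $K_{\min}(n)$ as, respectively, the largest and the smallest multiple of $4$ lying in the Weil interval $I=[1-2^{(n+2)/2},\,1+2^{(n+2)/2}]$, and then to evaluate these two extremal multiples explicitly through a residue computation on $M:=\lfloor 2^{n/2+1}\rfloor$. First I would assemble the two ingredients that pin the value set of $K_n$ down exactly. By Lemma \ref{lem.Kn2}, every value $K_n(\gamma)$ satisfies $K_n(\gamma)\equiv 0\,({\rm mod}\,4)$. On the other hand, the classical Weil bound for binary Kloosterman sums gives $|K_n(\gamma)-1|\leq 2^{(n+2)/2}$, so every value lies in $I$; and Lemma \ref{lem.Kn1} shows conversely that every integer $\equiv 0\,({\rm mod}\,4)$ inside $I$ is actually attained for some $\gamma$. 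Combining these, the value set of $K_n$ is exactly $\{s\in 4\mathbb{Z}:s\in I\}$, whence $K_{\max}(n)$ and $K_{\min}(n)$ are precisely the largest and smallest elements of $4\mathbb{Z}$ contained in $I$.

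Next I would reduce the interval endpoints to the floor $M$. Since $2^{(n+2)/2}=2^{n/2+1}$, when $n$ is even this equals the integer $M=2^{n/2+1}$, while when $n$ is odd it is irrational with $M<2^{n/2+1}<M+1$. In either case the largest integer not exceeding $1+2^{n/2+1}$ equals $1+M$, and the smallest integer not below $1-2^{n/2+1}$ equals $1-M$. Thus the task becomes purely arithmetic: find the largest multiple of $4$ that is at most $1+M$, and the smallest multiple of $4$ that is at least $1-M$.

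The final step is a case analysis on $M\equiv j\,({\rm mod}\,4)$ with $j\in\{0,1,2,3\}$. For the maximum one checks directly that the largest multiple of $4$ not exceeding $1+M$ is $M+1$ when $j=3$ (since then $1+M\equiv 0$) and is $M-j$ when $j\in\{0,1,2\}$; for the minimum, the smallest multiple of $4$ at least $1-M$ is $4-M$ when $j=0$ and $j-M$ when $j\in\{1,2,3\}$. These are exactly the claimed formulas for $K_{\max}(n)$ and $K_{\min}(n)$. One should also confirm the extremal value stays inside $I$ in the borderline case: when $j=3$ the value $M+1$ satisfies $M+1\leq 1+2^{n/2+1}$ because $M\leq 2^{n/2+1}$ by definition of the floor, so Lemma \ref{lem.Kn1} indeed guarantees its attainment.

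I expect the only delicate point to be the bookkeeping with the floor when $n$ is odd: there $2^{n/2+1}$ is irrational, and one must verify that the extremal integers in $I$ are $1\pm M$ rather than $\pm M$, after which the residue analysis is elementary. Conceptually, the crux is that using \emph{both} directions of the Lachaud--Wolfmann characterization—the Weil magnitude bound together with the achievability statement of Lemma \ref{lem.Kn1}—forces the value set of $K_n$ to be an exact arithmetic progression with common difference $4$, which is precisely what converts the problem ``extremal Kloosterman value'' into the tractable problem ``extremal multiple of $4$ in the interval $I$.''
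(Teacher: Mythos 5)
Your proposal is correct, and it follows the route the paper clearly intends (the lemma is stated without proof, immediately after Lemma \ref{lem.Kn1} and Lemma \ref{lem.Kn2}, as a consequence of them): the Weil bound plus Lemma \ref{lem.Kn2} confine the value set to the multiples of $4$ in $[1-2^{(n+2)/2},\,1+2^{(n+2)/2}]$, Lemma \ref{lem.Kn1} shows all of them are attained, and your residue bookkeeping on $\lfloor 2^{n/2+1}\rfloor$ (including the case distinction between $n$ even, where the endpoint is an integer, and $n$ odd, where it is irrational) correctly recovers both displayed formulas.
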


Let $n=2m$. For simplicity, denote the conjugate of $x\in\ftwon$ over $\ftwom$ by $\overline{x}$, i.e., $\overline{x}=x^{2^m}$. The unit circle of $\ftwon$ is defined as follows:
$$\mu_{2^m+1}:=\{z\in\ftwon: z\overline{z}=1\}.$$
The connection between the set $\mu_{2^m+1}$ and the Kloosterman sum is given by the following lemma, which will be used to prove our main result later.

\begin{lem}{\rm(\cite{D,LW,L})}\label{lem.km-U}
	Let $n=2m$ and $\gamma\in\ftwon^*$, where $m$ is a positive integer.
	Then
	\[\sum_{x\in\mu_{2^m+1}}\chi({\rm Tr}_1^n(\gamma x))=1-K_m(\gamma\overline{\gamma}).\]
\end{lem}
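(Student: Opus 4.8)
The plan is to move the sum from the multiplicative unit circle into an additive Kloosterman sum over the subfield $\ftwom$ by a fiber-counting argument. Write $S:=\sum_{z\in\mu_{2^m+1}}\chi({\rm Tr}_1^n(\gamma z))$ and put $a:=\gamma\overline\gamma\in\ftwom^*$ (nonzero since $\gamma\neq 0$). First I would use the transitivity ${\rm Tr}_1^n={\rm Tr}_1^m\circ{\rm Tr}_m^n$ with ${\rm Tr}_m^n(w)=w+\overline w$, together with the identity $\overline z=z^{-1}$ that holds on $\mu_{2^m+1}$ because $z\overline z=1$, to rewrite ${\rm Tr}_1^n(\gamma z)={\rm Tr}_1^m(\gamma z+\overline\gamma z^{-1})$; one checks directly that $\gamma z+\overline\gamma z^{-1}$ is fixed by conjugation and hence lies in $\ftwom$. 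Next I would substitute $v=\gamma z$: as $z$ runs over $\mu_{2^m+1}$, $v$ runs exactly over the norm fiber $\{v\in\ftwon:v\overline v=a\}$ of size $2^m+1$, on which $\overline v=av^{-1}$, so that $S=\sum_{v\overline v=a}\chi({\rm Tr}_1^m(v+av^{-1}))$. Grouping by the value $c=v+av^{-1}\in\ftwom$ reduces the problem to determining, for each $c$, the number $N(c)$ of $v$ in the fiber solving $v^2+cv+a=0$.

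The heart of the argument is this fiber count. For $c=0$ the equation reads $v^2=a$, whose unique root $v=\sqrt a\in\ftwom$ satisfies $v\overline v=v^2=a$, so $N(0)=1$. For $c\neq 0$ the substitution $v=cu$ gives $u^2+u+ac^{-2}=0$, which by the standard theory of the equation $u^2+u+\beta$ in characteristic two is irreducible over $\ftwom$ precisely when ${\rm Tr}_1^m(ac^{-2})=1$. The key observation is that a root has norm $a$ exactly in this irreducible case: then the two roots form a conjugate pair $v,\overline v$ lying in $\ftwon$ with $v\overline v=a$, whereas if both roots lay in $\ftwom$ their product $a$ would equal $v_i^2=v_i\overline{v_i}$ only for a repeated root, which cannot occur since $u^2+u+\beta$ is separable. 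Hence $N(c)=2$ when ${\rm Tr}_1^m(ac^{-2})=1$ and $N(c)=0$ otherwise; as a check, $1+2\cdot 2^{m-1}=2^m+1$ recovers the fiber size. I expect this step, namely identifying the norm-$a$ condition with irreducibility and disposing of the degenerate point $c=0$, to be the main obstacle; the rest is bookkeeping.

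Finally I would assemble $S=1+2\sum_{c\in\ftwom^*,\,{\rm Tr}_1^m(ac^{-2})=1}\chi({\rm Tr}_1^m(c))$, replace the indicator of $\{{\rm Tr}_1^m(ac^{-2})=1\}$ by $\frac{1}{2}(1-\chi({\rm Tr}_1^m(ac^{-2})))$, and use $\sum_{c\in\ftwom}\chi({\rm Tr}_1^m(c))=0$ to collapse this to $S=-\sum_{c\in\ftwom^*}\chi({\rm Tr}_1^m(ac^{-2}+c))$. Invoking the Frobenius invariance of the trace, which gives ${\rm Tr}_1^m(ac^{-2})={\rm Tr}_1^m(\sqrt a\,c^{-1})$, and then the substitution $c\mapsto c^{-1}$ turn the remaining sum into $\sum_{x\in\ftwom^*}\chi({\rm Tr}_1^m(\sqrt a\,x+x^{-1}))=K_m(\sqrt a)-1=K_m(a)-1$, where the last equality uses the invariance $K_m(a)=K_m(a^2)$ of Kloosterman sums. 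This yields $S=1-K_m(a)$, as desired.
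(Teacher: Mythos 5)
The paper does not prove this lemma at all: it is quoted as a known result with citations to Dillon, Lachaud--Wolfmann, and Leander, so there is no internal proof to compare against. Your argument is a correct, self-contained derivation, and every step checks out. The reduction ${\rm Tr}_1^n(\gamma z)={\rm Tr}_1^m(\gamma z+\overline{\gamma}z^{-1})$ via $\overline z=z^{-1}$ on $\mu_{2^m+1}$, the identification of $\{\gamma z: z\in\mu_{2^m+1}\}$ with the norm fiber $\{v:v\overline v=a\}$, and the fiber count $N(0)=1$, $N(c)=2$ or $0$ according to ${\rm Tr}_1^m(ac^{-2})$ are all right: in the irreducible case the two roots of $v^2+cv+a$ are a conjugate pair with product $a$, hence both have norm $a$, while in the reducible case $v\overline v=v^2=a$ together with $v_1v_2=a$ would force a repeated root. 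The sanity check $1+2\cdot 2^{m-1}=2^m+1$ and the final bookkeeping, including $K_m(\sqrt a)=K_m(a)$ via the substitution $x\mapsto x^2$, are also correct. One convention you rely on implicitly is that the paper's $K_m$ sums over all of $\ftwom$ with $0^{-1}:=0$, so that the $x=0$ term contributes $+1$ and $\sum_{x\in\ftwom^*}\chi({\rm Tr}_1^m(\sqrt a\,x+x^{-1}))=K_m(\sqrt a)-1$; this matches the convention the paper adopts explicitly later, so your $S=1-K_m(a)$ is exactly the statement of the lemma.
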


Note that $\mu_{2^m+1}\cap\ftwom=\{1\}$. It is well-known that each $x\in\ftwon^*$ can be uniquely written as $x=yz$ for some $y\in\ftwom^*$ and $z\in\mu_{2^m+1}$. Below, we provide an alternative expression for $x\in\ftwon\backslash\ftwom$, which will be useful in the subsequent computation of certain exponential sums.

\begin{lem}\label{lem.x}
	Let $n=2m$ be a positive integer. Then each element $x\in\ftwon\backslash\ftwom$ can be uniquely written as $x=v_1(v_2+1)/(v_1+v_2)$, where $v_1,v_2\in\mu_{2^m+1}\backslash \{1\}$ and $v_1\ne v_2$.
\end{lem}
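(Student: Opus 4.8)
The plan is to establish existence and uniqueness at one stroke by writing down an explicit inverse map. Concretely, for $x\in\ftwon\backslash\ftwom$ I claim the only admissible pair is $v_1=x/\overline{x}$ and $v_2=(x+1)/(\overline{x}+1)$, and I would organize the whole proof around verifying this.

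The one genuinely nonobvious step, which I expect to be the crux, is a conjugation trick that decouples the two unknowns. Applying the Frobenius $y\mapsto\overline{y}=y^{2^m}$ to the defining relation $x=v_1(v_2+1)/(v_1+v_2)$ and using $\overline{v_i}=v_i^{-1}$ (which holds because $v_i\in\mu_{2^m+1}$), then clearing the factor $v_1v_2$ from numerator and denominator, yields the companion identity $\overline{x}=(v_2+1)/(v_1+v_2)$. Dividing the original relation by this one collapses the right-hand side to $v_1$, forcing $v_1=x/\overline{x}$; feeding this back into $\overline{x}=(v_2+1)/(v_1+v_2)$ and solving the resulting linear equation for $v_2$ forces $v_2=(x+1)/(\overline{x}+1)$. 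Since both unknowns are thereby pinned down, uniqueness is immediate, and these same formulas are the only candidates for existence.

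It then remains to check that this candidate pair is legitimate and actually returns $x$. First, $x\notin\ftwom$ gives $\overline{x}\neq x$; and if $\overline{x}=1$ then $x=\overline{\overline{x}}=\overline{1}=1\in\ftwom$, a contradiction, so $\overline{x}\notin\{0,1\}$ and both denominators are nonzero. A one-line computation gives $v_i\overline{v_i}=1$, so $v_1,v_2\in\mu_{2^m+1}$. Each of $v_1\neq1$, $v_2\neq1$, and $v_1\neq v_2$ reduces after cross-multiplication to the single inequality $x\neq\overline{x}$, hence all three hold. Finally, substituting the two formulas into $v_1(v_2+1)/(v_1+v_2)$ and simplifying—using $x\overline{x}+\overline{x}x=0$ in characteristic $2$—shows that a common factor $x+\overline{x}$ (nonzero exactly because $x\notin\ftwom$) cancels and the expression returns $x$, completing existence. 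As a sanity check on the statement, the admissible ordered pairs number $2^m(2^m-1)=2^n-2^m=\#(\ftwon\backslash\ftwom)$, so an explicit two-sided inverse is precisely what one should expect. The only real labor is the bookkeeping of these side conditions; the conjugation identity removes all the difficulty.
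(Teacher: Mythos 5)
Your proof is correct, and all the computations check out: applying Frobenius to the defining relation and using $\overline{v_i}=v_i^{-1}$ does give $\overline{x}=(v_2+1)/(v_1+v_2)$, whence $v_1=x/\overline{x}$ and $v_2=(x+1)/(\overline{x}+1)$, and the verification that this pair is admissible and returns $x$ goes through exactly as you describe (the needed nondegeneracies $\overline{x}\ne 0,1$ and $x\ne\overline{x}$ all follow from $x\notin\ftwom$). The route is genuinely different from the paper's: the paper first checks that every admissible pair lands in $\ftwon\backslash\ftwom$, then proves injectivity of $\varphi(v_1,v_2)=v_1(v_2+1)/(v_1+v_2)$ by squaring the relation and invoking the uniqueness of the polar decomposition $\ftwon^*=\ftwom^*\cdot\mu_{2^m+1}$, and finally concludes surjectivity by counting $\#\mathbb{V}=2^m(2^m-1)=\#(\ftwon\backslash\ftwom)$. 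Your explicit two-sided inverse subsumes both the injectivity argument and the counting step, and in fact recovers the same underlying mechanism (your $v_1=x/\overline{x}$ is precisely the unit-circle component of $x^2=(x\overline{x})(x/\overline{x})$ that the paper extracts less directly). What your version buys is a constructive parametrization and a shorter uniqueness argument; what the paper's version buys is the explicit observation, used implicitly later when the sum over $\ftwon\backslash\ftwom$ is rewritten as a sum over pairs, that \emph{every} admissible pair yields an element outside $\ftwom$ --- though, as you note, this also follows from your inverse together with the cardinality count.
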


\begin{proof}
	Let $x=v_1(v_2+1)/(v_1+v_2)$, where $v_1,v_2\in\mu_{2^m+1}\backslash \{1\}$ and $v_1\ne v_2$. Then we have $x\in\ftwon\backslash\ftwom$. Otherwise $x\in\ftwom$, i.e., $x^{2^m}=x$ gives \[\frac{v_1^{-1}(v_2^{-1}+1)}{v_1^{-1}+v_2^{-1}}=\frac{v_1(v_2+1)}{v_1+v_2}.\]
This gives  $(v_1+v_2)(1+v_1)(1+v_2)=0$, a contradiction.	
Thus $x\in\ftwon\backslash\ftwom$ for all $v_1\ne v_2\in\mu_{2^m+1}\backslash \{1\}$. Denote $\mathbb{V}:=\{(v_1,\,v_2): v_1,\,v_2\in\mu_{2^m+1}\backslash \{1\} \,\, {\rm and} \,\, v_1\ne v_2\}$.
	Next we claim that the mapping $\varphi: (v_1,\,v_2)\rightarrow \frac{v_1(v_2+1)}{v_1+v_2}$ from $\mathbb{V}$ to $\ftwon\backslash\ftwom$ is a bijection.
	Assume that for $(v_1,\,v_2)\ne (v_1',\,v_2')\in\mathbb{V}$, one has
	\begin{equation}\label{v1v2}
		\frac{v_1(v_2+1)}{v_1+v_2}=\frac{v_1'(v_2'+1)}{v_1'+v_2'},
	\end{equation}
which is equivalent to
\[\frac{v_1^2(v_2^2+1)}{v_1^2+v_2^2}=\frac{v_1'^2(v_2'^2+1)}{v_1'^2+v_2'^2}.\]
	Then we have $u_1/u_1'=v_1'/v_1$, where $u_1=v_1(v_2^2+1)/(v_1^2+v_2^2)$ and $u_1'=v_1'(v_2'^2+1)/(v_1'^2+v_2'^2)$.
	Since $u_1,\,u_1'\in\ftwom^*$, $v_1,\,v_1'\in\mu_{2^m+1}\backslash \{1\}$ and $\mu_{2^m+1}\cap\ftwom=\{1\}$, we have $u_1/u_1'=v_1'/v_1=1$, that is $u_1=u_1'$ and $v_1=v_1'$. Further, \eqref{v1v2} becomes
	\[\frac{v_1(v_2+1)}{v_1+v_2}=\frac{v_1(v_2'+1)}{v_1+v_2'},\]
	which can be simplified as
	\[v_1(v_1+1)(v_2+v_2')=0.\]
	Therefore, we can derive that $v_2=v_2'$ due to $v_1\ne 0,1$, a contradiction with $(v_1,\,v_2)\ne (v_1',\,v_2')$. Thus, $\varphi$ is an injection. In addition, $\#\mathbb{V}=2^m(2^m-1)$, which infers $\varphi$ is bijection from $\mathbb{V}$ to $\ftwon\backslash\ftwom$. Hence we can conclude that each $x\in\ftwon\backslash\ftwom$ can be uniquely written as $x=v_1(v_2+1)/(v_1+v_2)$, where $v_1,v_2\in\mu_{2^m+1}\backslash \{1\}$ and $v_1\ne v_2$. This completes the proof.
\end{proof}

\subsection{Differential-linear connectivity table}
Recently, Bar-On et al. \cite{Bar-On} presented the concept of the differential-linear connectivity table (DLCT) of $(n,m)$-functions over vector spaces. Due to the isomorphism between vector spaces and finite fields, the definition of DLCT will be converted to that of $(n,m)$-functions over finite fields.

\begin{defn}{\rm(\cite{Bar-On})}\label{def.DLCT}
Let $F$ be an $(n,m)$-function. The {\rm DLCT} of $F$ is a $2^n \times 2^m$ table whose rows correspond to input differences to $F$ and whose columns correspond to output masks of $F$. Formally, for $u\in \ftwon$ and $v\in\ftwom$, the {\rm DLCT} entry $(u,v)$ is
\[{\rm DLCT}_F(u,v)=\#\{x\in\ftwon| {\rm Tr}_1^m(v(F(x+u)+F(x)))=0\}-2^{n-1}.\]
\end{defn}
The autocorrelation of an $(n,m)$-function at point $(u,v)\in\F_{2^n}\times\F_{2^m}$ is defined as
\[{\rm AC}_F(u,v)=\sum_{x\in\F_{2^n}}(-1)^{{\rm Tr}_1^m(v(F(x+u)+F(x)))}.\]
It is known from \cite[Proposition 1]{Canteaut2} that ${\rm DLCT}_F(u,v)=\frac{1}{2} {\rm AC}_F(u,v)$, i.e.,
\begin{equation}\label{eq.DLCT}
	{\rm DLCT}_F(u,v)=\frac{1}{2}\sum_{x\in\ftwon}(-1)^{{\rm Tr}_1^m(v(F(x+u)+F(x)))}.
\end{equation}
This formula allows us to compute the {\rm DLCT} of $F$ conveniently.

It is straightforward to observe that ${\rm DLCT}_F(u,v)$ is always even, and for any $(u,v)\in\F_{2^n}\times\F_{2^m}$, $|{\rm DLCT}_F(u,v)|\leq 2^{n-1}$ with ${\rm DLCT}_F(u,v)=2^{n-1}$ when either $u=0$ or $v=0$. Therefore we only need to consider the cases where $u\in\F_{2^n}^*$ and $v\in\F_{2^m}^*$.

\begin{defn}{\rm(\cite{LLLQ})}\label{def.DLU}
	Let $F$ be an $(n,m)$-function. {The {\rm DLCT} spectrum of $F$ is defined as the multi-set
\[\Gamma_F=\{{\rm DLCT}_F(u,\,v): u\in\ftwon^*,v\in\ftwom^*\},\]
and} the differential-linear uniformity {\rm (DLU)} of $F$ is defined as
\[{\rm DLU}_{F}=\max_{u\in\ftwon^*,v\in\ftwom^*}|{\rm DLCT}_F(u,v)|.\]
\end{defn}

{The DLU and DLCT spectrum of $F$ serve as important metrics for quantifying its resistance against differential-linear cryptanalysis and related variants.} It is clear that the {\rm DLU} of any $(n,m)$-functions is upper bounded by $2^{n-1}$.
As pointed out in \cite{Canteaut2}, the lower bound on {\rm DLU} can be characterized as follows.
\begin{lem}{\rm(\cite{Canteaut2})}\label{lem.bound.DLU}
	Let $F$ be an $(n,m)$-function, where $m\geq n-1$. Then
	\[{\rm DLU}_{F}\geq\sqrt{\frac{2^{m+n+1}-2^{2n}}{4(2^m-1)}}.\]
	In special, for an $(n,n)$-function, ${\rm DLU}_{F}\geq 2^{n/2-1}+2$ if $n$ is even.
\end{lem}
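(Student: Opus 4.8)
The plan is to reduce the statement to a second–moment (averaging) estimate on the autocorrelation and then exploit the Wiener--Khinchin identity together with a trivial count of additive quadruples. Since ${\rm DLCT}_F(u,v)=\frac12{\rm AC}_F(u,v)$ by \eqref{eq.DLCT}, it suffices to lower bound $M:=\max_{u\in\ftwon^*,v\in\ftwom^*}|{\rm AC}_F(u,v)|$, because ${\rm DLU}_F=\tfrac12 M$. I would use the elementary inequality $M^2\ge \frac{1}{(2^n-1)(2^m-1)}\sum_{u\in\ftwon^*}\sum_{v\in\ftwom^*}{\rm AC}_F(u,v)^2$, so the whole problem becomes a lower estimate for this double sum of squares.

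To evaluate that double sum, fix $v\in\ftwom^*$ and note that ${\rm AC}_F(\cdot,v)$ is exactly the autocorrelation of the component Boolean function $F_v(x)={\rm Tr}_1^m(vF(x))$, whose Walsh transform is $W_F(w,v)=\sum_{x\in\ftwon}(-1)^{F_v(x)+{\rm Tr}_1^n(wx)}$. The Wiener--Khinchin identity (Parseval applied to the transform pair formed by the autocorrelation and the squared Walsh spectrum) gives $\sum_{u\in\ftwon}{\rm AC}_F(u,v)^2=2^{-n}\sum_{w\in\ftwon}W_F(w,v)^4$. I would then peel off the degenerate terms: since ${\rm AC}_F(0,v)=2^n$, and $W_F(w,0)=2^n$ for $w=0$ and $0$ otherwise, summing over $v\in\ftwom^*$ and subtracting the $u=0$ contributions reduces everything to the full fourth moment $\sum_{v\in\ftwom}\sum_{w\in\ftwon}W_F(w,v)^4$.

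The key computation is that this fourth moment equals $2^{m+n}\mathcal N$, where $\mathcal N$ counts the quadruples $(x_1,x_2,x_3,x_4)\in\ftwon^4$ with $x_1+x_2+x_3+x_4=0$ and $F(x_1)+F(x_2)+F(x_3)+F(x_4)=0$; this follows by expanding the fourth power and summing the additive characters in $v$ and $w$. A lower bound for $\mathcal N$ comes for free from the pair--cancellation solutions (those where the four indices split into two equal pairs), of which inclusion--exclusion gives exactly $3\cdot 2^{2n}-2\cdot 2^n$. Substituting back, the degenerate terms combine so that $\sum_{u\ne0,v\ne0}{\rm AC}_F(u,v)^2\ge (2^n-1)\big(2^{m+n+1}-2^{2n}\big)$; dividing by $(2^n-1)(2^m-1)$ and taking square roots yields $M\ge\sqrt{(2^{m+n+1}-2^{2n})/(2^m-1)}$, hence the claimed bound on ${\rm DLU}_F=\tfrac12 M$. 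The hypothesis $m\ge n-1$ enters only here, to keep the radicand nonnegative.

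For the special case $m=n$ with $n$ even, the general bound specializes to ${\rm DLU}_F\ge 2^n/\big(2\sqrt{2^n-1}\big)$, and since $\sqrt{2^n-1}<2^{n/2}$ this is strictly larger than $2^{n/2-1}$. The refinement to $2^{n/2-1}+2$ is then a parity upgrade: ${\rm DLU}_F$ is a nonnegative even integer because every DLCT entry is even, and for $n\ge4$ the threshold $2^{n/2-1}$ is itself even, so an even integer strictly exceeding it must be at least $2^{n/2-1}+2$. I expect the main obstacle to be bookkeeping rather than ideas: verifying that the degenerate $u=0$ and $v=0$ terms cancel to produce the clean factor $(2^n-1)\big(2^{m+n+1}-2^{2n}\big)$, so that the crude pair--cancellation count already suffices, and confirming the evenness facts needed for the final parity step.
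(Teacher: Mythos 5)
The paper states this lemma without proof, simply citing \cite{Canteaut2}, and your argument is correct and is essentially the standard fourth-moment proof from that reference. All the computations check out: Parseval applied to the autocorrelation/Walsh pair, the quadruple count $\mathcal{N}\ge 3\cdot 2^{2n}-2\cdot 2^{n}$, the exact identity $2^{m}(3\cdot 2^{2n}-2\cdot 2^{n})-2^{3n}-(2^{m}-1)2^{2n}=(2^{n}-1)(2^{m+n+1}-2^{2n})$ that makes the degenerate terms combine cleanly, and the parity upgrade to $2^{n/2-1}+2$ for even $n\ge 4$ (the case $n=2$ must indeed be excluded, as the stated bound would exceed the trivial upper bound $2^{n-1}$ there).
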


Among all power functions with known DLU, the inverse function and the Kasami APN permutation are the two classes of $(n,n)$-functions with the smallest DLU for even and odd $n$, respectively, and their DLUs are given as below.

\begin{prop}{\rm(\cite{Charpin})}\label{prop-inverse}
	Let $n$ be a positive integer. The {\rm DLCT} of the inverse function $F(x)=x^{2^n-2}$ is given by
	\[{\rm DLCT}_F(u,v)=K_n(u^{-1}v)/2-1+(-1)^{{\rm Tr}_1^n(u^{-1}v)},\]
	where $u,v\in\F_{2^n}^*$. Moreover, ${\rm DLU}_F=2^{n/2}$ when $n$ is even.
\end{prop}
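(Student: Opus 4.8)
The plan is to compute the autocorrelation ${\rm AC}_F(u,v)$ directly from \eqref{eq.DLCT} and then identify it with a Kloosterman sum. Since $F(x)=x^{-1}$ with the convention $0^{-1}=0$, I would first normalize the sum by the substitution $x\mapsto ux$, which turns $v(F(x+u)+F(x))$ into $w((x+1)^{-1}+x^{-1})$ with $w=u^{-1}v$; this already explains why the entry depends only on $u^{-1}v$. The next step is to isolate the two singular points $x=0$ and $x=1$, where the inverse convention forces $(x+1)^{-1}+x^{-1}=1$, so together they contribute $2(-1)^{{\rm Tr}_1^n(w)}$. For all other $x$ one has the clean identity $(x+1)^{-1}+x^{-1}=1/(x^2+x)$, so what remains is to evaluate $T=\sum_{x\neq 0,1}(-1)^{{\rm Tr}_1^n(w/(x^2+x))}$.

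The heart of the argument is to recognize $T$ as a Kloosterman sum. I would use that the additive map $x\mapsto x^2+x$ is two-to-one from $\F_{2^n}\setminus\{0,1\}$ onto the set of nonzero trace-zero elements, so $T=2\sum_{s\neq 0,\,{\rm Tr}_1^n(s)=0}(-1)^{{\rm Tr}_1^n(w/s)}$. Writing the trace-zero indicator as $\tfrac{1}{2}(1+(-1)^{{\rm Tr}_1^n(s)})$ splits this into two complete sums over $\F_{2^n}^*$: the first is $\sum_{s\neq 0}(-1)^{{\rm Tr}_1^n(w/s)}=-1$, since $s\mapsto w/s$ permutes $\F_{2^n}^*$ and the full sum over $\F_{2^n}$ vanishes; the second, after the substitution $x=1/s$, becomes $\sum_{x\neq 0}(-1)^{{\rm Tr}_1^n(wx+x^{-1})}=K_n(w)-1$ by the very definition of $K_n$. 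Hence $T=K_n(w)-2$, and combining with the singular contribution gives ${\rm AC}_F(u,v)=K_n(w)-2+2(-1)^{{\rm Tr}_1^n(w)}$; halving yields the stated formula ${\rm DLCT}_F(u,v)=K_n(u^{-1}v)/2-1+(-1)^{{\rm Tr}_1^n(u^{-1}v)}$.

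For the {\rm DLU} when $n$ is even, the plan is to maximize $|{\rm DLCT}_F(u,v)|=|K_n(w)/2-1+(-1)^{{\rm Tr}_1^n(w)}|$ over $w\neq 0$ by splitting on the parity of ${\rm Tr}_1^n(w)$. By Lemma \ref{lem.Kn2}, ${\rm Tr}_1^n(w)=0$ forces $K_n(w)\equiv 0\,({\rm mod}\,8)$ and the entry collapses to $K_n(w)/2$, while ${\rm Tr}_1^n(w)=1$ forces $K_n(w)\equiv 4\,({\rm mod}\,8)$ and the entry becomes $K_n(w)/2-2$. Feeding in the extremal values from Lemma \ref{lem.Kn1-2} — for even $n$ one has $K_{\max}(n)=2^{n/2+1}$, a multiple of $8$ and hence trace-zero, and $K_{\min}(n)=4-2^{n/2+1}$, congruent to $4$ mod $8$ and hence trace-one — shows the largest magnitude is $2^{n/2}$, attained at $K_{\max}$ in the trace-zero case and at $K_{\min}$ in the trace-one case, with Lemma \ref{lem.Kn1} guaranteeing these values are realized.

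I anticipate the main obstacle to be the character-sum reduction: justifying cleanly that $x\mapsto x^2+x$ is exactly two-to-one onto the nonzero trace-zero elements, and organizing the indicator splitting so that both pieces collapse to $-1$ and $K_n(w)-1$. The {\rm DLU} step is then mostly bookkeeping, but it demands careful attention to the mod-$8$ divisibility so that the correct extremal Kloosterman value is paired with the correct sign $(-1)^{{\rm Tr}_1^n(w)}$; pairing the wrong residue class with the wrong parity would spuriously alter the maximum.
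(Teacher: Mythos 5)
Your proposal is correct. Note that the paper itself gives no proof of this proposition --- it is quoted from \cite{Charpin} --- so there is nothing to diverge from; what you have written is essentially the standard Charpin--Helleseth--Zinoviev derivation, and every step checks out: the normalization $x\mapsto ux$, the contribution $2(-1)^{{\rm Tr}_1^n(w)}$ from $x\in\{0,1\}$ under the convention $0^{-1}=0$, the identity $(x+1)^{-1}+x^{-1}=1/(x^2+x)$, the fact that $x\mapsto x^2+x$ is two-to-one onto the trace-zero hyperplane (since $x^2+x=y^2+y$ forces $x+y\in\{0,1\}$, with image the kernel of the trace by additive Hilbert 90), the indicator splitting yielding $T=K_n(w)-2$, and the even-$n$ bookkeeping, where Lemma \ref{lem.Kn2} automatically pairs $K_{\max}(n)=2^{n/2+1}\equiv 0\ ({\rm mod}\ 8)$ with ${\rm Tr}_1^n(w)=0$ and $K_{\min}(n)=4-2^{n/2+1}\equiv 4\ ({\rm mod}\ 8)$ with ${\rm Tr}_1^n(w)=1$, both realized by Lemma \ref{lem.Kn1}, giving ${\rm DLU}_F=2^{n/2}$.
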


\begin{prop}{\rm(\cite{Gong})}\label{prop-kasami}
	Let $F(x)=x^{2^{2k}-2^k+1}$ be a power function over $\F_{2^n}$, where $n$ is odd, not divisible by $3$, and $3k\equiv \pm 1\,(\,{\rm mod} \, n)$. Then ${\rm DLU}_F=2^{(n-1)/2}$.
\end{prop}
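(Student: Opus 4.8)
The plan is to reduce the DLU to a single one-variable exponential sum and then to evaluate that sum by exploiting the defining identity of the Kasami exponent. Write $d=2^{2k}-2^k+1$. Using the power-function structure in \eqref{eq.DLCT} and substituting $x=uy$ (for $u\neq 0$) gives
\[
{\rm DLCT}_F(u,v)=\frac12\sum_{y\in\ftwon}(-1)^{{\rm Tr}_1^n\left(w\left((y+1)^d+y^d\right)\right)},
\]
where $w=vu^d$. Since $F$ is a permutation, $u^d$ (hence $w$) runs over $\ftwon^*$ as $(u,v)$ does, so ${\rm DLCT}_F(u,v)$ depends only on $w$, and with $S(w)=\sum_{y\in\ftwon}(-1)^{{\rm Tr}_1^n(w\,g(y))}$ and $g(y)=(y+1)^d+y^d$ we get ${\rm DLU}_F=\tfrac12\max_{w\in\ftwon^*}|S(w)|$. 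Everything thus reduces to proving $\max_{w\neq 0}|S(w)|=2^{(n+1)/2}$.

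Next I would bring in the arithmetic of the exponent. The hypotheses force $\gcd(k,n)=1$ (and $3\nmid n$ is automatic, since $3k\equiv\pm1\pmod n$ is impossible when $3\mid n$); in particular $F$ is the Kasami APN permutation and $t\mapsto t^{2^k+1}$ is a bijection of $\ftwon$. The crucial feature is the identity $d(2^k+1)=2^{3k}+1$ which, together with $3k\equiv\pm1\pmod n$, makes $x^{2^{3k}+1}$ Frobenius-equivalent to the Gold function $x^3$. I would use this to transport $S(w)$ toward a Gold-type sum: raising $g(y)$ to the power $2^k+1$ yields
\[
g(y)^{2^k+1}=\bigl(y^{2^{3k}}+y+1\bigr)+(y+1)^{d2^k}y^{d}+(y+1)^{d}y^{d2^k},
\]
whose leading part $y^{2^{3k}}+y+1$ is exactly the affine derivative $D_1\bigl(x^{2^{3k}+1}\bigr)(y)$, the two cross terms being the genuine obstruction. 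Combined with the fact that $F$ is APN (so $g=D_1F$ is $2$-to-$1$ and $S(w)=2\sum_{c\in g(\ftwon)}(-1)^{{\rm Tr}_1^n(wc)}$), this is meant to express $S(w)$ through quantities governed by the three-valued almost-bent Walsh spectrum $\{0,\pm2^{(n+1)/2}\}$ of the Kasami/Gold functions.

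The main obstacle is precisely the exact evaluation of $S(w)$, i.e.\ controlling the cross terms above, or equivalently pinning down the Walsh support of $F$ and the character sum over it. It must be stressed that the almost-bent property alone cannot suffice: the Gold map $x^3$ is also almost bent, yet for odd $n$ its derivative $(y+1)^3+y^3=y^2+y+1$ has constant trace, forcing $|S(w)|=2^n$ at $w=1$ and hence ${\rm DLU}_{x^3}=2^{n-1}$. Consequently the argument has to use the finer multiplicative structure of the Kasami exponent --- this is exactly where $3k\equiv\pm1\pmod n$ enters --- to produce the cancellation that caps $|S(w)|$ at $2^{(n+1)/2}$. Finally I would check that this extremal value is attained for at least one $w\in\ftwon^*$, so that the bound is sharp and ${\rm DLU}_F=2^{(n-1)/2}$.
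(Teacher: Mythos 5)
Your setup is correct as far as it goes: the reduction ${\rm DLCT}_F(u,v)={\rm DLCT}_F(1,u^dv)$, the observations that $3k\equiv\pm1\pmod n$ forces $\gcd(k,n)=1$ and rules out $3\mid n$, the identity $d(2^k+1)=2^{3k}+1$ (so that $x^{d(2^k+1)}=x^{2^{\pm1}+1}$ modulo $x^{2^n}-x$), and the expansion of $g(y)^{2^k+1}$ are all accurate, and your warning that almost bentness alone cannot give the result (witness $x^3$, whose derivative has constant trace) is a genuinely good sanity check. But the proposal stops exactly at the theorem's content. The entire statement amounts to the two claims $\max_{w\neq0}|S(w)|\le 2^{(n+1)/2}$ and that this value is attained, and neither is established: you write that the cross terms $(y+1)^{d2^k}y^{d}+(y+1)^{d}y^{d2^k}$ are ``the genuine obstruction,'' that the combination with the Walsh spectrum ``is meant to express'' $S(w)$ suitably, and that you ``would check'' attainment --- these are declarations of intent, not arguments. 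In particular, passing from $g(y)$ to $g(y)^{2^k+1}$ inside the trace is not a legitimate move by itself (${\rm Tr}_1^n(wc)$ and ${\rm Tr}_1^n(wc^{2^k+1})$ are unrelated), and the $2$-to-$1$ property of $g$ only rewrites $S(w)$ as twice a character sum over the image of $g$; determining that image, or the resulting sum, is precisely the hard part and requires the fine structure of the Kasami exponent (in the literature this goes back to Dillon--Dobbertin-type results on $(x+1)^d+x^d$). So there is a genuine gap: the central evaluation is missing.

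For the comparison you were asked to make: the paper offers no proof of this proposition at all --- it is quoted verbatim from \cite{Gong} as a known result and used as a black box (e.g.\ in Theorem \ref{thm-kasami}). Your outline is therefore not an alternative to an in-paper argument; it is an incomplete reconstruction of the external proof, and in its current form it does not substitute for the citation.
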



%

\section{{\rm DLU} of some power functions}\label{cons1}
In this section, we consider the {\rm DLU} of some special power functions. For a power function $F(x)=x^d$ over $\ftwon$, where $1\le d\le 2^n-1$ is a positive integer, one can see that ${\rm DLCT}_F(u,v)={\rm DLCT}_F(1,\,u^dv)$ for all $u,\,v\in\ftwon^*$. That is to say, the {\rm DLU} of $F(x)$ is completely determined by the values of ${\rm DLCT}_F(1,\,v)$ as $v$ runs through $\ftwon^*$.

\subsection{Cubic power functions}

\begin{thm}\label{thm.cubic}
Let $F(x)=x^{2^{2k}+2^k+1}$ be a power function over $\ftwon$, where $n$ and $k$ are positive integers with $\gcd(k,\,n)=e$. Then ${\rm DLU}_F\leq 2^{(n+3e)/2-1}$ if $n$ is odd, and ${\rm DLU}_F\leq 2^{(n+4e)/2-1}$ otherwise.
\end{thm}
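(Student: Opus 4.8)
The plan is to reduce the computation of ${\rm DLCT}_F(1,v)$ to a Walsh transform of a quadratic Boolean function, whose magnitude is governed by the radical of an associated bilinear form. Writing $a=x$, $b=x^{2^k}$, $c=x^{2^{2k}}$ and using that the Frobenius is additive, I would first compute
\[
F(x+1)+F(x)=(a+1)(b+1)(c+1)+abc=Q(x)+L(x)+1,
\]
where $Q(x)=x^{2^k+1}+x^{2^{2k}+1}+x^{2^{2k}+2^k}$ is quadratic and $L(x)=x+x^{2^k}+x^{2^{2k}}$ is linear. Hence, by \eqref{eq.DLCT},
\[
{\rm DLCT}_F(1,v)=\tfrac{1}{2}(-1)^{{\rm Tr}_1^n(v)}\sum_{x\in\ftwon}(-1)^{g_v(x)+{\rm Tr}_1^n(\beta x)},
\]
with $g_v(x)={\rm Tr}_1^n(vQ(x))$ and $\beta=v+v^{2^{-k}}+v^{2^{-2k}}$. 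Since $g_v$ is quadratic, the inner sum is either $0$ or $\pm 2^{(n+d)/2}$, where $d$ is the $\mathbb{F}_2$-dimension of the radical of the alternating form $B(x,y)=g_v(x+y)+g_v(x)+g_v(y)$. Thus $|{\rm DLCT}_F(1,v)|\le 2^{(n+d)/2-1}$, and the whole problem reduces to bounding $d$ uniformly in $v$.

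Next I would make the radical explicit. Collecting the cross terms of $B$ and repeatedly applying ${\rm Tr}_1^n(z)={\rm Tr}_1^n(z^{2^{-k}})$, one rewrites $B(x,y)={\rm Tr}_1^n\!\big(x\,\Lambda(y)\big)$ for an explicit linearized polynomial $\Lambda$, so that the radical is exactly $\ker\Lambda$. Raising to the $2^{2k}$-th power (a bijection that preserves the kernel) clears the negative exponents and yields
\[
L(y)=v^{2^{2k}}y^{2^{4k}}+\big(v^{2^{2k}}+v^{2^{k}}\big)y^{2^{3k}}+\big(v^{2^{k}}+v\big)y^{2^{k}}+vy ,
\]
with $\ker L=\ker\Lambda$. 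The decisive observation is that $L$ is a $2^k$-linearized polynomial of $2^k$-degree $4$, so its root set in the algebraic closure is an $\mathbb{F}_{2^k}$-subspace of dimension at most $4$. Because $\gcd(k,n)=e$, the fields $\mathbb{F}_{2^k}$ and $\ftwon$ are linearly disjoint over $\mathbb{F}_{2^e}$ (their compositum has degree $(k/e)(n/e)$ over $\mathbb{F}_{2^e}$); hence $\mathbb{F}_{2^e}$-independent elements of $\ftwon$ stay $\mathbb{F}_{2^k}$-independent. Intersecting the root set with $\ftwon$ therefore gives $\dim_{\mathbb{F}_{2^e}}\ker L\le 4$, i.e. $d=e\cdot\dim_{\mathbb{F}_{2^e}}\ker L\le 4e$.

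This last step is the one I expect to be the main obstacle: the naive bound coming from the ordinary degree $2^{4k}$ of $L$ only yields $d\le 4k$, which is far too weak, and one genuinely must exploit both the linearized structure and the linear disjointness to replace $k$ by $e$. Granting $d\le 4e$, the quadratic Walsh estimate immediately gives ${\rm DLU}_F\le 2^{(n+4e)/2-1}$ in all cases, which is the even-$n$ bound.

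Finally, for odd $n$ I would sharpen $d\le 4e$ to $d\le 3e$ by a parity argument. The form $B$ is alternating, so its rank $n-d$ is even and thus $d\equiv n\pmod 2$; when $n$ is odd this forces $d$ odd. Moreover $\ker L$ is an $\mathbb{F}_{2^e}$-subspace, so $e\mid d$, and since $e\mid n$ with $n$ odd, $e$ is itself odd. Writing $d=em$ with $m=d/e\le 4$, the oddness of both $d$ and $e$ forces $m$ odd, hence $m\le 3$ and $d\le 3e$, giving ${\rm DLU}_F\le 2^{(n+3e)/2-1}$. The only remaining routine points are verifying the explicit coefficients of $\Lambda$ and recording the standard fact that a quadratic Boolean function has Walsh values of magnitude exactly $2^{(n+d)/2}$ or $0$.
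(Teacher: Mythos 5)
Your proposal is correct, and while its skeleton coincides with the paper's (both reduce the problem to bounding the kernel of the same linearized polynomial $L_v$ of \eqref{eq.Lv}, whose $2^{2k}$-th power is your $L$), the two key steps are carried out differently. For the reduction itself, the paper squares $2{\rm DLCT}_F(1,v)$ and substitutes $y=x+z$ to get $(2{\rm DLCT}_F(1,v))^2=2^n\sum_{z\in\ker L_v}\chi(\phi_v(z))\le 2^n\#\ker L_v$; your appeal to the Walsh spectrum of a quadratic Boolean function is the same estimate in packaged form, since the squaring trick is exactly how that spectrum result is proved. The genuine divergence is in bounding $\#\ker L_v$. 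The paper proceeds by an explicit substitution chain: setting $\alpha=z^{2^k}+z$ and $\beta=v^{2^k}\alpha^{2^{2k}}+v\alpha$, it writes $L_v(z)^{2^{2k}}=\beta^{2^k}+\beta$ and counts preimages at each of the three stages ($2^e$ choices of $\beta$, at most $2^{\gcd(2k,n)}$ choices of $\alpha$ per $\beta$, and $2^e$ choices of $z$ per $\alpha$), with the odd-$n$ saving coming from $\gcd(2k,n)=e$ when $n$ is odd. You instead view $L$ as an $\mathbb{F}_{2^k}$-linearized polynomial whose full root space in $\overline{\mathbb{F}_2}$ has $\mathbb{F}_{2^k}$-dimension at most $4$, use linear disjointness of $\mathbb{F}_{2^k}$ and $\mathbb{F}_{2^n}$ over $\mathbb{F}_{2^e}$ to conclude $\dim_{\mathbb{F}_{2^e}}\ker L_v\le 4$, and then obtain the odd-$n$ improvement from the parity constraint $d\equiv n\pmod 2$ (even rank of an alternating form) combined with $e\mid d$ and $e$ odd. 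Both counting arguments are valid and yield identical bounds; yours is more conceptual and makes transparent why the answer depends only on $e$ and the parity of $n$, while the paper's is more elementary and self-contained, requiring no facts about linear disjointness or the classification of quadratic forms.
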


\begin{proof}
According to equation \eqref{eq.DLCT}, one calculates
\begin{align*}
  2{\rm DLCT}_F(1,\,v) & =\sum_{x\in\ftwon} \chi\big( {\rm Tr}_1^n\big(v\big(F(x+1)+F(x)\big)\big)\big)\\
   & =\sum_{x\in\ftwon} \chi\big( {\rm Tr}_1^n\big(v(x^{2^{2k}+2^k}+x^{2^{2k}+1}+x^{2^{k}+1}+
   x^{2^{2k}}+x^{2^{k}}+x+1)\big)\big) \\
   &=\sum_{x\in\ftwon} \chi\big( {\rm Tr}_1^n\big(vx^{2^{2k}+1}+(v^{2^{-k}}\!+v)x^{2^{k}+1}+
   (v^{2^{-2k}}\!+v^{2^{-k}}\!+v)x+v\big)\big),
\end{align*}
and then $(2{\rm DLCT}_F(1,\,v))^2$ equals
\[  \sum_{x,\,y\in\ftwon} \chi\big( {\rm Tr}_1^n\big(v(x^{2^{2k}+1}+y^{2^{2k}+1})
  +(v^{2^{-k}}\!+v)(x^{2^{k}+1}+y^{2^{k}+1})+(v^{2^{-2k}}\!+v^{2^{-k}}\!+v)(x+y)\big)\big).\]
Substituting $y$ with $x+z$ gives
\[(2{\rm DLCT}_F(1,\,v))^2=\sum_{z\in\ftwon}\chi(\phi_v(z))\sum_{x\in\ftwon}\chi(L_{v}(z)x)\]
with
\begin{equation}\label{eq.phi}
\phi_v(z)={\rm Tr}_1^n(v z^{2^{2k}+1}+ (v^{2^{-k}}+v) z^{2^k+1}+ (v^{2^{-2k}}+v^{2^{-k}}+v)z)
\end{equation}
and
\begin{equation}\label{eq.Lv}
	L_v(z)=v z^{2^{2k}}+ (v^{2^{-k}}+v) z^{2^k}+ (v^{2^{-2k}}+v^{2^{-k}}) z^{2^{-k}}+v^{2^{-2k}} z^{2^{-2k}}.
\end{equation}
Let ${\rm ker}(L_v):=\{z\in\ftwon|L_v(z)=0\}$. Then
\[(2{\rm DLCT}_F(1,\,v))^2=2^n\sum_{z\in{\rm ker}(L_v)}\chi(\phi_v(z))\leq 2^n\#{\rm ker}(L_v). \]
Next, we determine $\#{\rm ker}(L_v)$ for each $v\in\ftwon^*$. It suffices to calculate the number of solutions of the equation
\begin{align*}
  L_v(z)^{2^{2k}}&=v^{2^{2k}} z^{2^{4k}}+ (v^{2^{2k}}+v^{2^{k}}) z^{2^{3k}}+ (v^{2^{k}}+v) z^{2^{k}}+vz \\
   & =v^{2^{2k}}(z^{2^k}+z)^{2^{3k}}+v^{2^k}(z^{2^{2k}}+z)^{2^k}+v(z^{2^k}+z)=0
\end{align*}
over $\ftwon$ for any $v\in\ftwon^*$.
Set $\alpha:=z^{2^k}+z$. Then
\[L_v(z)^{2^{2k}}=v^{2^{2k}}\alpha^{2^{3k}}+v^{2^k}(\alpha^{2^{k}}+\alpha)^{2^k}+v \alpha=\beta^{2^k}+\beta=0,\]
where $\beta=v^{2^k}\alpha^{2^{2k}}+v \alpha$, which has $2^e$ solutions with respect to $\beta$ over $\ftwon$. Assume that $\beta_0$ is a solution of $\beta^{2^k}+\beta=0$. Then $v^{2^k}\alpha^{2^{2k}}+v \alpha=\beta_0$. This equation has at most $2^e$ solutions if $n$ is odd and $2^{2e}$ solutions if $n$ is even. Therefore, we can conclude that $L_v(z)=0$ has at most $2^{3e}$ (resp. $2^{4e}$) solutions in $\ftwon$ if $n$ is odd (resp. even) since $\alpha=z^{2^k}+z$ is a $2^e$-to-$1$ mapping. This implies that $\#{\rm ker}(L_v)\leq 2^{3e}$ if $n$ is odd, and $\#{\rm ker}(L_v)\leq 2^{4e}$ otherwise. Consequently, the desired result follows.
\end{proof}

For the case $e=1$, we can directly derive the following result from Theorem \ref{thm.cubic}.
\begin{cor}\label{cor1}
	Let $F(x)=x^{2^{2k}+2^k+1}$ be a power function over $\ftwon$, where $n$ and $k$ are positive integers with $\gcd(k,\,n)=1$. Then ${\rm DLU}_F\leq 2^{(n+1)/2}$ if $n$ is odd, and ${\rm DLU}_F\leq 2^{n/2+1}$ otherwise.
\end{cor}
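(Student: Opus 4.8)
The plan is to obtain Corollary \ref{cor1} as an immediate specialization of Theorem \ref{thm.cubic} by setting $e=\gcd(k,\,n)=1$. Since Theorem \ref{thm.cubic} already establishes the general bounds ${\rm DLU}_F\leq 2^{(n+3e)/2-1}$ for odd $n$ and ${\rm DLU}_F\leq 2^{(n+4e)/2-1}$ for even $n$, all that remains is to substitute $e=1$ and simplify the resulting exponents. No new estimates on exponential sums or on $\#{\rm ker}(L_v)$ are required, because the hypothesis $\gcd(k,\,n)=1$ is precisely the instance $e=1$ of the general coprimality condition treated in the theorem.

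For the first step I would handle the odd case: substituting $e=1$ into $2^{(n+3e)/2-1}$ yields $2^{(n+3)/2-1}=2^{(n+1)/2}$, which gives the claimed bound ${\rm DLU}_F\leq 2^{(n+1)/2}$. For the even case, substituting $e=1$ into $2^{(n+4e)/2-1}$ gives $2^{(n+4)/2-1}=2^{n/2+1}$, matching the stated bound ${\rm DLU}_F\leq 2^{n/2+1}$. Both exponents are integers, since $n$ is odd in the first case and even in the second, so the expressions are well defined.

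I expect no genuine obstacle here: the entire analytic content---reducing $(2\,{\rm DLCT}_F(1,v))^2$ to $2^n\#{\rm ker}(L_v)$ and bounding the kernel size by $2^{3e}$ or $2^{4e}$ via the substitution $\alpha=z^{2^k}+z$ together with the linearized equation $\beta^{2^k}+\beta=0$---has already been carried out in the proof of Theorem \ref{thm.cubic}. The only point deserving a moment's care is confirming that the arithmetic simplification of the exponents is correct and that the parity of $n$ is matched to the correct branch of the bound, both of which are routine.
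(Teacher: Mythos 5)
Your proposal is correct and matches the paper exactly: the corollary is stated there as a direct specialization of Theorem \ref{thm.cubic} to $e=1$, and your exponent simplifications $2^{(n+3)/2-1}=2^{(n+1)/2}$ and $2^{(n+4)/2-1}=2^{n/2+1}$ are the whole content of that step.
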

\begin{example}\label{ex1}
	Let $k=1$ and $3\leq n\leq 18$, then $\gcd(k,n)=1$. Using Magma, it is verified that the {\rm DLU} of $F(x)=x^{7}$ over $\mathbb{F}_{2^n}$ takes exactly the values shown in Table \ref{tab-DCU-cubic}, which is consistent with Corollary \ref{cor1}.
	\begin{table}[ht]
		\caption{The {\rm DLU} of $x^{7}$ over $\F_{2^n}$ for $3\leq n\leq 18$ }\label{tab-DCU-cubic}
		\vspace{-2mm}
		\begin{center}{
				\begin{tabular}{lllllllllllllllll}
					\hline
					$n$ &  3  & 4 & 5& 6& 7& 8 & 9 & 10 & 11 & 12 & 13 & 14 &15 &16 & 17& 18\\
					${\rm DLU}_F$ & 4 & 4 & 4& 16 & 16 & 32 & 32 & 64 & 64 & 128 & 128& 256 & 256 &512&512&1024\\ \hline
			\end{tabular}}
		\end{center}
	\end{table}	
\end{example}

\begin{remark}
	Notice that in Example \ref{ex1}, when $n=4$ and $n=5$, ${\rm DLU}_F=2^{n/2}$ and ${\rm DLU}_F=2^{(n-1)/2}$ respectively. For other even and odd values of $n$, ${\rm DLU}_F=2^{n/2+1}$ and ${\rm DLU}_F=2^{(n+1)/2}$ respectively. This indicates the bound given in Corollary \ref{cor1} is tight for most cases.
\end{remark}

\begin{remark}
{For the general case $e>1$ in Theorem \ref{thm.cubic}, experiment data show that the upper bound remains tight in most cases when $e$ is small. For instance, it is achieved for $e=2$, $n=12$ with ${\rm DLU}_F=2^9$; and for $e=3$, $n=9$ with ${\rm DLU}_F=2^8$. However, for larger values of $e$, the bound is not tight in certain cases. For example, as known in \cite{Canteaut2}, ${\rm DLU}_F=2^{3k-1}$ for $n=4k$, whereas Theorem \ref{thm.cubic} only provides ${\rm DLU}_F\leq 2^{4k-1}$.
}
\end{remark}
\subsection{Dillon power functions}

\begin{thm}\label{thm.dillon}
Let $F(x)=x^{l(2^m-1)}$ be a power function over $\ftwon$, where $n=2m$ and $l$ are positive integers with $\gcd(l,2^m+1)=1$.
Then the {\rm DLCT} of $F(x)$ at point $v\in\F_{2^n}^*$ is given by
\[{\rm DLCT}_F(1,\,v)=K_m(v\bv)^2/2-2{\rm Tr}_1^n(v)K_m(v\bv)^2.\]
Moreover,
	{\begin{equation*}
		{\rm DLU}_F=\left \{\begin{array}{lll}
K_{\max}(m)^2/2, &{\rm if}\,\, \lfloor2^{m/2+1}\rfloor\equiv j\,({\rm mod}\, 4),\,\,j=0,3;\\[0.05in]
			K_{\max}(m)^2/2+2K_{\max}(m), &{\rm if} \,\, \lfloor2^{m/2+1}\rfloor\equiv j\,({\rm mod}\, 4),\,\,j=1,2,
		\end{array}\right .
	\end{equation*}
where $K_{\max}(m)$ is given by \eqref{Kmax}.}
\end{thm}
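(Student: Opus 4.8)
The plan is to exploit the fact that $F(x)=x^{l(2^m-1)}$ depends only on the ``angular part'' of $x$. Since $x^{2^m-1}=\overline{x}/x$ for $x\in\ftwon^*$, we have $F(x)=(x/\overline{x})^{-l}$; in particular $F(0)=0$ and $F(x)=1$ for every $x\in\ftwom^*$. First I would treat the inner difference $F(x+1)+F(x)$. For $x\in\ftwon\backslash\ftwom$ I would invoke Lemma \ref{lem.x} to write $x=v_1(v_2+1)/(v_1+v_2)$ with $(v_1,v_2)\in\mathbb V$; a direct computation then gives $x+1=v_2(v_1+1)/(v_1+v_2)$ and, using $\overline{v_i}=v_i^{-1}$, the clean identities $x/\overline{x}=v_1$ and $(x+1)/\overline{(x+1)}=v_2$. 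Hence $F(x+1)+F(x)=v_1^{-l}+v_2^{-l}$ on the complement of $\ftwom$, which is the key simplification that linearizes the whole computation.

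Next I would split the sum in \eqref{eq.DLCT}. The part over $x\in\ftwom$ is elementary: $F(x+1)+F(x)=0$ for $x\neq 0,1$ and equals $1$ for $x\in\{0,1\}$, contributing $2^m-2+2(-1)^{{\rm Tr}_1^n(v)}$. For the part over $\ftwon\backslash\ftwom$, the bijection of Lemma \ref{lem.x} turns it into $\sum_{(v_1,v_2)\in\mathbb V}\chi({\rm Tr}_1^n(v(v_1^{-l}+v_2^{-l})))$. Since $\gcd(l,2^m+1)=1$ and $2^m+1$ is odd, $w\mapsto w^{-l}$ permutes $\mu_{2^m+1}\backslash\{1\}$, so this equals the off-diagonal double sum $S^2-2^m$, where $S=\sum_{w\in\mu_{2^m+1}\backslash\{1\}}\chi({\rm Tr}_1^n(vw))$. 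Lemma \ref{lem.km-U} evaluates $\sum_{w\in\mu_{2^m+1}}\chi({\rm Tr}_1^n(vw))=1-K_m(v\bv)$, and subtracting the $w=1$ term gives $S=-K_m(v\bv)-((-1)^{{\rm Tr}_1^n(v)}-1)$. Assembling the two parts and collapsing the cases ${\rm Tr}_1^n(v)=0,1$ yields the closed form ${\rm DLCT}_F(1,v)=K_m(v\bv)^2/2-2{\rm Tr}_1^n(v)K_m(v\bv)$.

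Finally, for the DLU I would use ${\rm DLU}_F=\max_{v\in\ftwon^*}|{\rm DLCT}_F(1,v)|$, noting the value depends only on $K:=K_m(v\bv)$ and $t:={\rm Tr}_1^n(v)$, giving $K^2/2$ when $t=0$ and $K^2/2-2K$ when $t=1$. The main obstacle is an achievability question: I must confirm that the extremal Kloosterman values can be paired with the correct trace. For this I would again apply Lemma \ref{lem.km-U} to the coset of elements of a fixed norm $N$, showing that the number of $v$ with $v\bv=N$ and ${\rm Tr}_1^n(v)=0$ (resp.\ $=1$) equals $2^{m-1}+1-K_m(N)/2$ (resp.\ $2^{m-1}+K_m(N)/2$); both are positive because $|K_m(N)|\le K_{\max}(m)$ is of order $2^{m/2+1}\ll 2^m$, so every attainable $K$---in particular $K_{\max}(m)$ and $K_{\min}(m)$ from Lemma \ref{lem.Kn1-2}---occurs with both $t=0$ and $t=1$. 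The competing candidate maxima are then $K_{\max}(m)^2/2$ (from $t=0$) and $K_{\min}(m)^2/2+2|K_{\min}(m)|$ (from $t=1$, using $K_{\min}<0$). Substituting the explicit formulas for $K_{\max}(m),K_{\min}(m)$ and splitting into the four residue classes of $\lfloor 2^{m/2+1}\rfloor\bmod 4$ shows $K_{\max}$ dominates when $j\in\{0,3\}$ and the $t=1$ value at $K_{\min}$ (equal to $K_{\max}^2/2+2K_{\max}$ since $|K_{\min}|=K_{\max}$ there) dominates when $j\in\{1,2\}$, giving the claimed DLU. This final case comparison is routine arithmetic; the genuine content is the achievability argument ensuring the required $(K,t)$ pairs are realized.
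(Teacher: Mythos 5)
Your proposal is correct and follows essentially the same route as the paper: the parametrization of $\ftwon\backslash\ftwom$ via Lemma \ref{lem.x}, the substitution $v_i^{-l}\mapsto v_i$ justified by $\gcd(l,2^m+1)=1$, completion of the off-diagonal double sum to a square evaluated by Lemma \ref{lem.km-U}, and the same coset-counting argument showing the extremal Kloosterman values occur with the required trace. Note that both your derivation and the paper's proof yield ${\rm DLCT}_F(1,v)=K_m(v\bv)^2/2-2\,{\rm Tr}_1^n(v)K_m(v\bv)$, so the square on the last factor in the displayed theorem statement is a typo.
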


\begin{proof}
According to the equation \eqref{eq.DLCT}, one has
\[2{\rm DLCT}_F(1,\,v)  =\sum_{x\in\ftwon} \chi\big( {\rm Tr}_1^n\big(v\big((x+1)^{l(2^m-1)}+x^{l(2^m-1)}\big)\big)\big).\]
Since $F(x+1)+F(x)=1$ if $x\in\mathbb{F}_2$ and $F(x+1)+F(x)=0$ if $x\in\ftwom\backslash\mathbb{F}_2$, we have
\begin{equation}\label{eq.DLCT-dillon}
 2{\rm DLCT}_F(1,\,v)  = 2\chi({\rm Tr}_1^n(v))+2^m-2+
   \sum_{x\in\ftwon\backslash \ftwom} \chi\big( {\rm Tr}_1^n\big(v\big((x+1)^{l(2^m-1)}+x^{l(2^m-1)}\big)\big)\big).
\end{equation}
From Lemma \ref{lem.x}, let $x=v_1(v_2+1)/(v_1+v_2)$ for $x\in\ftwon\backslash\ftwom$, where $v_1\ne v_2\in\mu_{2^m+1}\backslash \{1\}$.
A calculation gives
\[x+1=\frac{v_2(v_1+1)}{v_1+v_2},\,\,x^{2^m}=\frac{v_2+1}{v_1+v_2},\,\,(x+1)^{2^m}=\frac{v_1+1}{v_1+v_2}.\]
Then \eqref{eq.DLCT-dillon} turns into
\begin{align*}
  2{\rm DLCT}_F(1,\,v) & =2\chi({\rm Tr}_1^n(v))+2^m-2+
   \sum_{v_1\ne v_2\in\mu_{2^m+1}\backslash \{1\}} \chi\big( {\rm Tr}_1^n\big(v(v_1^{-l}+v_2^{-l})\big)\big) \\
   & =2\chi({\rm Tr}_1^n(v))+2^m-2+
   \sum_{v_1\ne v_2\in\mu_{2^m+1}\backslash \{1\}} \chi\big( {\rm Tr}_1^n\big(v(v_1+v_2)\big)\big)\\
   &=2\chi({\rm Tr}_1^n(v))+2^m-2+\sum_{v_1,v_2\in\mu_{2^m+1}\backslash \{1\}} \chi\big( {\rm Tr}_1^n\big(v(v_1+v_2)\big)\big)-2^m\\
   &=2\chi({\rm Tr}_1^n(v))-2+\Big(\sum_{v_1\in\mu_{2^m+1}\backslash \{1\}} \chi\big( {\rm Tr}_1^n(vv_1)\big)\Big)^2.
\end{align*}
The second equality holds due to $\gcd(l,2^m+1)=1$.
Note that
\[\sum_{v_1\in\mu_{2^m+1}\backslash \{1\}} \chi\big( {\rm Tr}_1^n(vv_1)\big)=\sum_{v_1\in\mu_{2^m+1}} \chi\big( {\rm Tr}_1^n(vv_1)\big)-\chi({\rm Tr}_1^n(v))=1-K_m(v\bv)-\chi({\rm Tr}_1^n(v))\]
from Lemma \ref{lem.km-U}. Therefore
\[2{\rm DLCT}_F(1,\,v)=2\chi({\rm Tr}_1^n(v))-2+(1-K_m(v\overline{v})-\chi({\rm Tr}_1^n(v)))^2.\]
Then as $v$ runs through $\ftwon^*$,
\begin{equation}\label{eq.dillon.tr0}
2{\rm DLCT}_F(1,\,v)=K_m(v\bv)^2
\end{equation}
if ${\rm Tr}_1^n(v)=0$, and
\begin{equation}\label{eq.dillon.tr1}
2{\rm DLCT}_F(1,\,v)=(2-K_m(v\bv))^2-4
\end{equation}
if ${\rm Tr}_1^n(v)=1$. Therefore we conclude that
\[{\rm DLCT}_F(1,\,v)=K_m(v\bv)^2/2-2{\rm Tr}_1^n(v)K_m(v\bv).\]
Moreover, we shall analyze the value of ${\rm DLU}_F$ in two cases, based on the maximum and minimum values of $K_m(v\bv)$ as $v$ ranges over $\F_{2^n}^*$.

\textbf{Case 1}: {If $\lfloor2^{m/2+1}\rfloor\equiv j\,({\rm mod}\, 4)$ for $j=0,3$, then by Lemma \ref{lem.Kn1-2}, we have $K_{\max}(m)=4-K_{\min}(m)$. More precisely, $K_{\max}(m)=\lfloor2^{m/2+1}\rfloor$ when $j=0$, and $K_{\max}(m)=\lfloor2^{m/2+1}\rfloor+1$ when $j=3$.} In this scenario, when ${\rm Tr}_1^n(v)=0$ and $K_m(v\bv)=K_{\max}(m)$, the value of $|{\rm DLCT}_F(1,\,v)|$ reaches its maximum, which is $K_{\max}(m)^2/2$.

\textbf{Case 2}: {If $\lfloor2^{m/2+1}\rfloor\equiv j\,({\rm mod}\, 4)$ for $j=1,2$, then Lemma \ref{lem.Kn1-2} yields $K_{\max}(m)=\lfloor2^{m/2+1}\rfloor-j=-K_{\min}(m)$. This implies that when ${\rm Tr}_1^n(v)=1$ and $K_m(v\bv)=-K_{\max}(m)$, the value of $|{\rm DLCT}_F(1,\,v)|$ reaches its maximum, which is $K_{\max}(m)^2/2+2K_{\max}(m)$.}

Below, we demonstrate the existence of some $v\in\F_{2^n}^*$ that satisfies the conditions for $|{\rm DLCT}_F(1,\,v)|$ to achieve its maximum value.

{If $\lfloor2^{m/2+1}\rfloor\equiv j\,({\rm mod}\, 4)$ for $j=0,3$,} according to Lemma \ref{lem.Kn1}, there exists an element $\gamma\in\F_{2^m}^*$ such that $K_m(\gamma)=K_{\max}(m)$ as $K_m(0)=0$. Let $V:=\{v\in\F_{2^n}^*: {\rm Tr}_1^n(v)=0,\, v\bv=\gamma\}$. We proceed to show that $\#V>0$. Since $\gamma\in\F_{2^m}^*$, we assume $\gamma=\lambda^{2^m+1}$ for a fixed $\lambda\in\F_{2^n}^*$. Then $v\bv=\gamma$ implies $v=\theta \lambda$ for any $\theta\in \mu_{2^m+1}$. Hence, $V=\{\theta\in\mu_{2^m+1}: {\rm Tr}_1^n(\theta \lambda)=0\}$. A calculation yields
\begin{align*}
	\# V & =\frac{1}{2}\sum_{a\in\F_2}\sum_{\theta\in \mu_{2^m+1}}(-1)^{a {\rm Tr}_1^n(\theta \lambda)} \\
	& =\frac{1}{2}\big(2^m+1+\sum_{\theta\in \mu_{2^m+1}}(-1)^{{\rm Tr}_1^n(\theta \lambda)}\big)\\
	&=\frac{1}{2}\big(2^m+1+1-K_m(\lambda\overline{\lambda})\big).
\end{align*}
The last equality follows from Lemma \ref{lem.km-U}. Since {$K_m(\lambda\overline{\lambda})=K_m(\gamma)
=K_{\max}(m)$, and combining this with the value of $K_{\max}(m)$ given by \eqref{Kmax}}, it follows that $\# V>0$.
That is to say, there exists some $v\in\F_{2^n}^*$ such that ${\rm Tr}_1^n(v)=0$ and {$K_m(v\bv)=K_{\max}(m)$.
Thus, we conclude that ${\rm DLU}_F= K_{\max}(m)^2/2$ if $\lfloor2^{m/2+1}\rfloor\equiv j\,({\rm mod}\, 4)$ for $j=0,3$.}

{If $\lfloor2^{m/2+1}\rfloor\equiv j\,({\rm mod}\, 4)$ for $j=1,2$, according to Lemma \ref{lem.Kn1}, there exists $\gamma'\in\F_{2^m}^*$ such that $K_m(\gamma')=K_{\min}(m)=-K_{\max}(m)$. Let $V':=\{v\in\F_{2^n}^*: {\rm Tr}_1^n(v)=1,\, v\bv=\gamma'\}$. Similarly, we have $v=\theta' \lambda'$ for any $\theta'\in \mu_{2^m+1}$ and $V'=\{\theta'\in\mu_{2^m+1}: {\rm Tr}_1^n(\theta' \lambda')=1\}$, where $\gamma'=\lambda'^{2^m+1}$. Next we prove that $\#V'>0$. By calculation, we have
\begin{align*}
	\# V' & =\frac{1}{2}\sum_{a\in\F_2}\sum_{\theta'\in \mu_{2^m+1}}(-1)^{a ({\rm Tr}_1^n(\theta' \lambda')-1)} \\
	& =\frac{1}{2}\big(2^m+1-\sum_{\theta'\in \mu_{2^m+1}}(-1)^{{\rm Tr}_1^n(\theta' \lambda')}\big)\\
	&=\frac{1}{2}\big(2^m+K_m(\lambda'\overline{\lambda'})\big).
\end{align*}
Obviously $\#V'>0$ since $K_m(\lambda'\overline{\lambda'})=K_{\min}(m)=j-\lfloor2^{m/2+1}\rfloor$. In other words, there exists some $v\in\F_{2^n}^*$ such that ${\rm Tr}_1^n(v)=1$ and $K_m(v\bv)=-K_{\max}(m)$.
Consequently, we conclude that ${\rm DLU}_F=K_{\max}(m)^2/2+2K_{\max}(m)$ if $\lfloor2^{m/2+1}\rfloor\equiv j\,({\rm mod}\, 4)$ for $j=1,2$.}

This completes the proof.
\end{proof}

\begin{example}\label{ex2}
	Let $l=1$ and $n=2m$ for $2\leq m\leq 9$, then $\gcd(l,2^m+1)=1$. Using Magma, it is verified that the {\rm DLU} of $F(x)=x^{2^m-1}$ over $\mathbb{F}_{2^n}$ takes exactly the values shown in Table \ref{tab-DLCT-dillon}, which is consistent with Theorem \ref{thm.dillon}.
	\begin{table}[ht]
		\caption{The {\rm DLU} of $x^{2^{m}-1}$ over $\F_{2^{2m}}$ for $2\leq m\leq 9$ }\label{tab-DLCT-dillon}
		\vspace{-2mm}
		\begin{center}{
				\begin{tabular}{llllllllll}
					\hline
					$m$ &  2 & 3 & 4 & 5& 6& 7& 8 & 9   \\
					${\rm DLU}_F$ & 8 & 16 & 32& 72 & 128 & 240 & 512 & 1056 \\ \hline
			\end{tabular}}
		\end{center}
	\end{table}	
\end{example}

\section{{\rm DLU} of some special polynomials}\label{cons2}
In this section, we explore the {\rm DLU} of special polynomials based on certain power functions with known {\rm DLU}.
\subsection{{\rm DLU} of polynomials from quadratic functions}\label{cons2.1}
First, we investigate the DLU of a cubic function and a quadratic polynomial combined.
\begin{prop}
\label{thm.cubic+qur}
Let $G(x)=x^{2^{2k}+2^k+1}+Q(x)\in\ftwon[x]$, where $Q(x)=\sum_{0\leq i<j\leq n-1}a_{ij}x^{2^i+2^j}$ is a quadratic function and $n,\,k$ are positive integers with $\gcd(k,\,n)=e$. Then ${\rm DLU}_G\leq 2^{(n+3e)/2-1}$ if $n$ is odd and otherwise ${\rm DLU}_G\leq 2^{(n+4e)/2-1}$.
\end{prop}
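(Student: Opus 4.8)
The plan is to mirror the proof of Theorem \ref{thm.cubic}, the key new observation being that the quadratic part $Q$ contributes nothing to the quantity that ultimately controls the bound. Since $G$ is no longer a power function, I cannot invoke the reduction ${\rm DLCT}_F(u,v)={\rm DLCT}_F(1,u^dv)$ used throughout Section \ref{cons1}, so I would work with an arbitrary pair $(u,v)\in\ftwon^*\times\ftwon^*$. Writing $F(x)=x^{2^{2k}+2^k+1}$, the first step is to record that the derivative of a quadratic function is affine: $Q(x+u)+Q(x)=L_u(x)+c_u$, where $L_u$ is $\F_2$-linear in $x$ and $c_u$ is a constant. Consequently $G(x+u)+G(x)$ has exactly the same quadratic-in-$x$ part as $F(x+u)+F(x)$, and differs from it only by the affine term $L_u(x)+c_u$.

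Next I would apply the square-and-substitute device of Theorem \ref{thm.cubic} to $2{\rm DLCT}_G(u,v)$ via \eqref{eq.DLCT}: squaring and replacing $y$ by $x+z$ turns $(2{\rm DLCT}_G(u,v))^2$ into a double sum $\sum_z\chi(\cdot)\sum_x\chi(\cdot)$. The crucial point is that every affine-in-$x$ contribution collapses after this substitution: the constant $c_u$ cancels, and any $\F_2$-linear $\ell$ satisfies $\ell(x)+\ell(x+z)=\ell(z)$, which is independent of $x$. Hence the affine terms of $G(x+u)+G(x)$—in particular the entire contribution of $Q$, together with the linear and constant parts of $F(x+u)+F(x)$—only produce a factor depending on $z$ alone, while the quadratic-in-$x$ part produces the sole $x$-dependent (linear-in-$x$) piece. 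Collapsing the inner sum over $x$ as in \eqref{eq.Lv}, one obtains $(2{\rm DLCT}_G(u,v))^2=2^n\sum_{z\in{\rm ker}(\mathcal{L}_{u,v})}\chi(\Psi_{u,v}(z))\le 2^n\,\#{\rm ker}(\mathcal{L}_{u,v})$, where $\mathcal{L}_{u,v}$ is a linearized polynomial determined solely by the quadratic-in-$x$ part of $F(x+u)+F(x)$ and by $v$; the quadratic function $Q$ has dropped out entirely.

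It then remains to bound $\#{\rm ker}(\mathcal{L}_{u,v})$ uniformly over $u,v\in\ftwon^*$. Here I would exploit the homogeneity of the power function: since $F(uw)=u^{d}F(w)$ with $d=2^{2k}+2^k+1$, the substitution $x=uw$ gives $F(x+u)+F(x)=u^d(F(w+1)+F(w))$, so the quadratic form defining $\mathcal{L}_{u,v}$ at $(u,v)$ is carried, through $z=uz'$, to the one defining $L_{u^dv}$ at the point $(1,u^dv)$ of Theorem \ref{thm.cubic}; a direct check of the $u$-exponents shows $\mathcal{L}_{u,v}(uz')=u^{-1}L_{u^dv}(z')$, whence $\#{\rm ker}(\mathcal{L}_{u,v})=\#{\rm ker}(L_{u^dv})$. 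The uniform-in-$v$ estimate already established in Theorem \ref{thm.cubic}, namely $\#{\rm ker}(L_v)\le 2^{3e}$ for $n$ odd and $\le 2^{4e}$ for $n$ even, then applies verbatim. Combining, $|{\rm DLCT}_G(u,v)|\le\frac{1}{2}\sqrt{2^{n+3e}}=2^{(n+3e)/2-1}$ when $n$ is odd and $|{\rm DLCT}_G(u,v)|\le 2^{(n+4e)/2-1}$ when $n$ is even, for every $(u,v)$, which yields the claimed bounds on ${\rm DLU}_G$.

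I expect the main obstacle to be the bookkeeping that isolates the quadratic-in-$x$ part of $F(x+u)+F(x)$ and verifies cleanly that $Q$ makes no contribution to $\mathcal{L}_{u,v}$: this is where the whole result lives, and one must confirm that no linear-in-$x$ cross term secretly survives the substitution $y=x+z$. The secondary difficulty is the passage to general $u$, unavailable through the power-function reduction for $G$; the homogeneity substitution above resolves it at essentially no cost. If one preferred a self-contained computation instead, one would rerun the kernel count of Theorem \ref{thm.cubic} directly with the substitution $\alpha=uz^{2^k}+u^{2^k}z$ (which is $2^e$-to-$1$) followed by a factorization of the resulting $2^k$-linearized equation in $\alpha$, at the price of heavier algebra.
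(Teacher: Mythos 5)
Your proposal is correct and follows essentially the same route as the paper: the paper also mirrors Theorem \ref{thm.cubic}, notes that the derivative of $Q$ is affine in $x$ and hence only shifts the $z$-dependent phase after the square-and-substitute step, and reduces the kernel count to that of $L_\theta$ with $\theta=u^{2^{2k}+2^k+1}v$. The only cosmetic difference is that the paper performs the normalization $x\mapsto ux$ before squaring (so the kernel identification with Theorem \ref{thm.cubic} is immediate), whereas you square first and transport the kernel via $z=uz'$ afterwards; both are valid.
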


\begin{proof}
According to the equation \eqref{eq.DLCT}, one has
\[2{\rm DLCT}_G(u,v) =\sum_{x\in\ftwon} \chi( {\rm Tr}_1^n(v((x+u)^{2^{2k}+2^k+1}+x^{2^{2k}+2^k+1}+Q(x+u)+Q(x))))\]
and
\begin{align*}
{\rm Tr}_1^n(v(Q(x+u)+Q(x)))&={\rm Tr}_1^n(v(\!\!\sum_{0\leq i<j\leq n-1}\!\!\!\! a_{ij}(u^{2^j}x^{2^i}+u^{2^i}x^{2^j}+u^{2^i+2^j})))\\
	& ={\rm Tr}_1^n(\!\!\sum_{0\leq i<j\leq n-1}\!\!\!\!((a_{ij}^{2^{-i}}u^{2^{j-i}}v^{2^{-i}}
+a_{ij}^{2^{-j}}u^{2^{i-j}}v^{2^{-j}})x+{a_{ij}}u^{2^i+2^j}v)).
\end{align*}
Replacing $x$ by $ux$ gives
\[2{\rm DLCT}_G(u,v) =\!\!\!\sum_{x\in\ftwon} \chi( {\rm Tr}_1^n(\theta((x+1)^{2^{2k}+2^k+1}+x^{2^{2k}+2^k+1})+L(u,v)x+C(u,v)),\]
where $\theta=u^{2^{2k}+2^k+1}v$, $L(u,v)=\sum\nolimits_{0\leq i<j\leq n-1}(a_{ij}^{2^{-i}}u^{2^{j-i}}v^{2^{-i}}+a_{ij}^{2^{-j}}u^{2^{i-j}}v^{2^{-j}})u$ and $C(u,v)=\sum_{0\leq i<j\leq n-1} {a_{ij}}u^{2^i+2^j}v$.
By a calculation, one obtains
\[2{\rm DLCT}_G(u,v)\! =\!\!\sum_{x\in\ftwon}\!\! \chi( {\rm Tr}_1^n(\theta x^{2^{2k}+1}+(\theta^{2^{-k}}\!\!+\theta)x^{2^{k}+1}+
(\theta^{2^{-2k}}\!\!+\theta^{2^{-k}}\!\!+\theta+L(u,v))x+\theta+C(u,v))),\]
and then $(2{\rm DLCT}_G(u,v))^2$ equals
\[  \sum_{x,\,y\in\ftwon}\!\!\!\! \chi\big( {\rm Tr}_1^n\big(\theta(x^{2^{2k}+1}+y^{2^{2k}+1})
+(\theta^{2^{-k}}\!\!+\theta)(x^{2^{k}+1}+y^{2^{k}+1})+(\theta^{2^{-2k}}\!\!+\theta^{2^{-k}}\!\!+\theta+L(u,v))(x+y)\big)\big).\]
Substituting $y$ with $x+z$ gives
\[(2{\rm DLCT}_G(u,v))^2=\sum_{z\in\ftwon}\chi(\phi_{\theta}(z)+{\rm Tr}_1^n(L(u,v)z))\sum_{x\in\ftwon}\chi(L_{\theta}(z)x),\]
where $\phi_{\theta}(z)$ and $L_{\theta}(z)$ are given by \eqref{eq.phi} and \eqref{eq.Lv} respectively. Then
\[(2{\rm DLCT}_G(u,v))^2=2^n\sum_{z\in{\rm ker}(L_{\theta})}\chi(\phi_{\theta}(z)+{\rm Tr}_1^n(L(u,v)z))\leq 2^n\#{\rm ker}(L_{\theta}).\]
Recall from the proof of Theorem \ref{thm.cubic} that $\#{\rm ker}(L_{\theta})\leq 2^{3e}$ if $n$ is odd and otherwise $\#{\rm ker}(L_{\theta})\leq 2^{4e}$ for ${\theta}\in \F_{2^n}^*$. Hence the desired result follows.
\end{proof}

Set $e=1$, one readily obtains the following corollary from Theorem \ref{thm.cubic+qur}.
\begin{cor}\label{cor.cubic+qur}
	Let $G(x)=x^{2^{2k}+2^k+1}+\sum_{0\leq i<j\leq n-1}a_{ij}x^{2^i+2^j}\in\ftwon[x]$, where $n$ and $k$ are positive integers with $\gcd(k,\,n)=1$. Then ${\rm DLU}_G\leq 2^{(n+1)/2}$ if $n$ is odd and otherwise ${\rm DLU}_G\leq 2^{n/2+1}$.
\end{cor}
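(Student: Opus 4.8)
The plan is to obtain this corollary as a direct specialization of Proposition \ref{thm.cubic+qur}. The hypotheses of the corollary are exactly those of the proposition with the parameter $e=\gcd(k,n)$ set equal to $1$: the function $G$ has precisely the stated form $x^{2^{2k}+2^k+1}+\sum_{0\leq i<j\leq n-1}a_{ij}x^{2^i+2^j}$, and $\gcd(k,n)=1$. So I would simply invoke Proposition \ref{thm.cubic+qur} and substitute $e=1$ into its two bounds, leaving only an arithmetic simplification of the exponents.

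Concretely, for odd $n$ the proposition gives ${\rm DLU}_G\leq 2^{(n+3e)/2-1}$, and putting $e=1$ yields
\[
{\rm DLU}_G\leq 2^{(n+3)/2-1}=2^{(n+1)/2}.
\]
For even $n$ the proposition gives ${\rm DLU}_G\leq 2^{(n+4e)/2-1}$, and putting $e=1$ yields
\[
{\rm DLU}_G\leq 2^{(n+4)/2-1}=2^{n/2+1}.
\]
This matches the two claimed bounds, completing the argument.

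The main point to observe is that there is essentially no obstacle here: all of the substantive work resides in Proposition \ref{thm.cubic+qur}, whose proof reduces $(2\,{\rm DLCT}_G(u,v))^2$ to $2^n\,\#{\rm ker}(L_{\theta})$ and then imports the kernel bound $\#{\rm ker}(L_{\theta})\leq 2^{3e}$ (for $n$ odd) or $2^{4e}$ (for $n$ even) established inside Theorem \ref{thm.cubic}. The corollary is therefore purely a matter of recording the $e=1$ consequence, and the only care needed is the elementary exponent bookkeeping shown above.
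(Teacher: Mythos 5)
Your proposal is correct and matches the paper exactly: the paper also obtains this corollary by setting $e=1$ in Proposition \ref{thm.cubic+qur}, and your exponent arithmetic $(n+3)/2-1=(n+1)/2$ and $(n+4)/2-1=n/2+1$ is right.
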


\begin{example}\label{ex.G}
Let $k=1$ and $G(x)=x^{7}+w x^3$, where $2\leq n\leq 12$ and $w$ is a primitive element of $\F_{2^n}$. Then $\gcd(k,n)=1$. Using Magma, it is verified that the {\rm DLU} of $G(x)$ over $\mathbb{F}_{2^n}$ takes exactly the values shown in Table \ref{tab-DLCT-cubic-qur}, which is consistent with Corollary \ref{cor.cubic+qur}.	
\begin{table}[ht]
		\caption{The {\rm DLU} of $x^{7}+w x^3$ over $\F_{2^{n}}$ for $2\leq n\leq 12$ }\label{tab-DLCT-cubic-qur}
		\vspace{-2mm}
		\begin{center}{
				\begin{tabular}{lllllllllll}
					\hline
					$n$ & 3& 4 & 5 &6& 7 & 8& 9&10& 11 &12 \\
					${\rm DLU}_F$ & 4 & 8&8 & 16& 16&32 & 32 & 64 & 64 &128 \\ \hline
			\end{tabular}}
		\end{center}
\end{table}	
\end{example}


\begin{remark}
From Example \ref{ex1} and Example \ref{ex.G}, it can be observed that when $n=4$ or $n=5$, the {\rm DLU} of $x^7$ and $x^7+w x^3$ are different, which implies that they are not equivalent.
\end{remark}

{\begin{remark}
As shown in the proof of Theorem \ref{thm.cubic+qur}, including the quadratic term does not affect the upper bound on the {\rm DLU} established by the theorem. However, experimental data indicate that the quadratic term can influence the {\rm DLCT} spectrum of the function. For example, although the functions $F(x)=x^{7}$ and $G(x)=x^7+ w x^3$ over $\F_{2^8}$ have the same DLU, their DLCT spectra, as presented in Table \ref{tab-DLS}, differ.
 \begin{table}[ht]
	{\caption{The {\rm DLCT} spectrum of $F(x)=x^{7}$ and $G(x)=x^7+ w x^3$ over $\F_{2^8}$} \label{tab-DLS}
	\vspace{-2mm}
	\begin{center}{
			\begin{tabular}{llllll}
				\hline
				${\rm DLCT}_F(u,v)$ &-32 & -16 & 0 & 16 & 32  \\
				Multiplicity & 255 & 18360 & 30600 & 14790 & 1020\\ \hline
				${\rm DLCT}_G(u,v)$ &	 -32 & -16 & 0 & 16 & 32  \\
				Multiplicity &  315 & 14952 & 34536 & 14870 & 352 \\ \hline
		\end{tabular}}
	\end{center}}
\end{table}
\end{remark}}

\subsection{{\rm DLU} of polynomials from generalized cyclotomic mappings}\label{cons2.2}
Let $d,\,n$ be positive integers such that $d\mid 2^n-1$, and $\omega$ be a primitive element of $\mathbb{F}_{2^n}$. Let $C$ be the (unique) index $d$ subgroup of $\mathbb{F}_{2^n}^*$. Then the cosets of $C$ in $\mathbb{F}_{2^n}^*$ are of the form $C_i:=\omega^i C$ for $i\in\mathbb{Z}_{d}$.
It can be seen that $\mathbb{F}_{2^n}=(\bigcup_{i=0}^{d-1}C_i)\bigcup\{0\}$ and $C_i\bigcap C_j=\emptyset$ for $i\ne j$.
Let $(a_0,a_1,\cdots,a_{d-1})\in\mathbb{F}_{2^n}^{d}$ and $r_0,r_1,\cdots,r_{d-1}$ be $d$ non-negative integers.
A generalized cyclotomic mapping \cite{BW,W} of $\mathbb{F}_{2^n}$ of index $d$ is defined as follows:
\begin{equation*}
	F(x)=\left \{\begin{array}{ll}
		0, & {\rm if} \,\, x=0,\\
		a_i x^{r_i}, &{\rm if} \,\, x\in C_i,\,i\in\mathbb{Z}_{d}.
	\end{array} \right.
\end{equation*}
In special, $F(x)$ is called cyclotomic mapping if all $r_i$ are the same.
It turns out that every polynomial fixing $0$ can be represented by a cyclotomic mapping uniquely according to its index \cite{AGW2009}.
{In this subsection, we focus on the case of maximum index where each $C_i$ contains only one element.}

\begin{thm}\label{thm.cyc}
	Let $F(x)$ and $F_i(x)$ be $(n,n)$-functions, and let $\xi_i$ denote $t$ distinct elements in $\F_{2^n}$, where $i=1,\cdots,t$. Define
\begin{equation*}
	f(x)=\left \{\begin{array}{ll}
		F(x), & {\rm if} \,\, x\in\F_{2^n}\backslash N,\\
		F_i(x), &{\rm if} \,\, x=\xi_i,\,i=1,\cdots,t,
	\end{array} \right.
\end{equation*}
where $N:=\{\xi_i: i=1,\cdots,t\}$. Then ${\rm DLU}_f\leq {\rm DLU}_{F}+2t$.
\end{thm}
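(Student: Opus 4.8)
The plan is to exploit that $f$ and $F$ differ at only the $t$ points of $N$, so their DLCTs can differ only by a controlled amount. Fixing $u\in\ftwon^*$ and $v\in\ftwon^*$, I would start from the character-sum expression \eqref{eq.DLCT}, writing
\[
2\,{\rm DLCT}_f(u,v)=\sum_{x\in\ftwon}\chi\big({\rm Tr}_1^n(v(f(x+u)+f(x)))\big),
\]
and the analogous identity for $F$. The key observation is a term-by-term comparison: the summand indexed by $x$ is identical for $f$ and $F$ whenever $f(x)=F(x)$ and $f(x+u)=F(x+u)$; that is, it can differ only when $x\in N$ or $x+u\in N$.

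Next I would count the indices at which the two summands can disagree. Since $f$ and $F$ agree off $N$, the set of such ``bad'' indices is contained in $N\cup(N+u)$, where $N+u=\{\xi_i+u:1\le i\le t\}$. As $|N|=|N+u|=t$, we get $\#\big(N\cup(N+u)\big)\le 2t$. For each bad index the two summands are values in $\{+1,-1\}$, so their difference has absolute value at most $2$; for every other index the difference is $0$.

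Summing these termwise differences and applying the triangle inequality yields
\[
\big|2\,{\rm DLCT}_f(u,v)-2\,{\rm DLCT}_F(u,v)\big|\le 2\cdot\#\big(N\cup(N+u)\big)\le 4t,
\]
so $|{\rm DLCT}_f(u,v)-{\rm DLCT}_F(u,v)|\le 2t$. Hence $|{\rm DLCT}_f(u,v)|\le|{\rm DLCT}_F(u,v)|+2t\le{\rm DLU}_F+2t$ for every $u,v\in\ftwon^*$, and taking the maximum over such $(u,v)$ gives ${\rm DLU}_f\le{\rm DLU}_F+2t$.

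The argument is essentially bookkeeping, so there is no deep obstacle; the one place to be careful is the counting step — making sure the affected indices really are confined to $N\cup(N+u)$ (both because $x\in N$ changes $f(x)$ and because $x+u\in N$ changes $f(x+u)$), and that this union has at most $2t$ elements rather than being double-counted. It is also worth noting that the specific values $F_i(\xi_i)$ play no role: only the number $t$ of modified points matters, which is precisely what makes the bound uniform over all choices of the $F_i$.
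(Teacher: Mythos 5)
Your argument is correct and follows essentially the same route as the paper: both split the character sum over the ``affected'' index set $N\cup(N+u)$ (which the paper calls $N(u)$) and its complement, bound each affected summand's change by $2$, and note that this set has at most $2t$ elements. The only difference is cosmetic: the paper additionally pairs each $\xi$ with $\xi+u$ (using that the summands at these two points coincide) to obtain an exact identity for ${\rm DLCT}_f(u,v)-{\rm DLCT}_F(u,v)$, which it reuses later, but this refinement is not needed for the stated bound.
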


\begin{proof}
		According to equation \eqref{eq.DLCT}, one has
	\[2{\rm DLCT}_f(u,v) =\sum_{x\in\ftwon} \chi( {\rm Tr}_1^n(v(f(x+u)+f(x)))).\]
	Define $\Delta_f(x)=f(x+u)+f(x)$ and $N(u):=\{\xi_i,\,u+\xi_i: i=1,\cdots,t\}$. Then, distinguishing between the values of $f(x)$ for the cases $x\in N(u)$ and $x\in\F_{2^n}\backslash N(u)$, we obtain
\begin{eqnarray}
	2{\rm DLCT}_f(u,v)\!\!\!\!\!\!\!\!&&= \sum_{x\in N(u)} \chi( {\rm Tr}_1^n(v\Delta_f(x)))+ \sum_{x\in \F_{2^n}\backslash N(u)}\chi( {\rm Tr}_1^n(v\Delta_F(x))) \nonumber\\
	&&= \sum_{x\in\ftwon} \chi( {\rm Tr}_1^n(v\Delta_F(x)))+ \sum_{x\in N(u)} (\chi( {\rm Tr}_1^n(v\Delta_f(x)))-\chi( {\rm Tr}_1^n(v\Delta_F(x)))) \nonumber\\
	&&=2{\rm DLCT}_{F}(u,v)+ \sum_{x\in N(u)} (\chi( {\rm Tr}_1^n(v\Delta_f(x)))-\chi( {\rm Tr}_1^n(v\Delta_F(x)))).\label{eq-f1}
\end{eqnarray}
Note that if $\xi_i\in N(u)$, then $\xi_i+u \in N(u)$, which implies that $\# N(u)$ is even since $\xi_i+u\ne \xi_i$. Without loss of generality, we assume that $N(u)=\{\xi_{i_1},\cdots,\xi_{i_s},\xi_{i_1}+u,\cdots,\xi_{i_s}+u\}$, where $\{i_1,\cdots,i_s\}\subset\{1,\cdots,t\}$. Clearly, $\# N(u)=2s\leq 2t$.
Substituting $N(u)$ into \eqref{eq-f1}, one gets
\[2{\rm DLCT}_f(u,v)=2{\rm DLCT}_{F}(u,v)+ 2\sum_{j=1}^s (\chi( {\rm Tr}_1^n(v\Delta_f(\xi_{i_j})))-\chi( {\rm Tr}_1^n(v\Delta_F(\xi_{i_j}))))\]
due to $\Delta_f(\xi_{i_j})=\Delta_f(\xi_{i_j}+u)$ and $\Delta_F(\xi_{i_j})=\Delta_F(\xi_{i_j}+u)$. Furthermore, we obtain
\begin{equation}\label{eq-f2}{\rm DLCT}_f(u,v)={\rm DLCT}_{F}(u,v)+ \sum_{j=1}^s (\chi( {\rm Tr}_1^n(v\Delta_f(\xi_{i_j})))-\chi( {\rm Tr}_1^n(v\Delta_F(\xi_{i_j})))),
\end{equation}
which leads to
$${\rm DLU}_f\leq {\rm DLU}_F+2s\leq {\rm DLU}_F+2t.$$
This completes the proof.
\end{proof}

In the case where $N=\{\xi\}$ for a fixed element $\xi\in\F_{2^n}$, the following result can be directly obtained from Theorem \ref{thm.cyc}.
\begin{cor}\label{cor-cyc1}
Let $F(x),\,F_1(x)$ be two $(n,n)$-functions and $\xi\in\F_{2^n}$. Define
\[f(x)=\left \{\begin{array}{ll}
F(x), & {\rm if} \,\, x\in \F_{2^n}\backslash \{\xi\},\\
F_1(x), &{\rm if} \,\, x=\xi.
\end{array} \right.\]
Then ${\rm DLU}_f\leqslant{\rm DLU}_{F}+2$.
\end{cor}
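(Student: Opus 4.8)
The plan is to prove Corollary \ref{cor-cyc1} as a direct specialization of Theorem \ref{thm.cyc}, which has already been established in the excerpt. The corollary is exactly the case $t=1$ of the theorem, so the cleanest approach is to simply invoke the theorem and trace through what the bound $\mathrm{DLU}_f\leq \mathrm{DLU}_F+2t$ becomes when only a single point is modified. First I would set $N=\{\xi\}$ and $F_1(x)$ as the single modifying function, matching the data of Theorem \ref{thm.cyc} with $t=1$. Then the conclusion ${\rm DLU}_f\leq{\rm DLU}_F+2t = {\rm DLU}_F+2$ follows immediately upon substituting $t=1$.

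If a self-contained argument is preferred instead of a bare citation, I would reproduce the key identity \eqref{eq-f2} in this simpler setting. With $N=\{\xi\}$, the auxiliary set becomes $N(u)=\{\xi,\,u+\xi\}$, which always has exactly two elements since $u\in\F_{2^n}^*$ forces $u+\xi\neq\xi$. Following the derivation in the proof of Theorem \ref{thm.cyc}, one obtains
\[
{\rm DLCT}_f(u,v)={\rm DLCT}_{F}(u,v)+\chi({\rm Tr}_1^n(v\Delta_f(\xi)))-\chi({\rm Tr}_1^n(v\Delta_F(\xi))),
\]
where the single summand ($s=1$) replaces the sum over $j$. Since each character value $\chi(\cdot)=\pm1$, the correction term is bounded in absolute value by $2$, yielding $|{\rm DLCT}_f(u,v)|\leq |{\rm DLCT}_F(u,v)|+2$ for every $u\in\F_{2^n}^*$ and $v\in\F_{2^n}^*$. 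Taking the maximum over all such $(u,v)$ gives ${\rm DLU}_f\leq{\rm DLU}_F+2$.

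I do not anticipate any genuine obstacle here, as the result is a formal consequence of the more general theorem; the only point requiring mild care is confirming that $s=1$ in the specialization, i.e.\ that modifying $F$ at the single point $\xi$ forces $N(u)$ to consist of the pair $\{\xi,u+\xi\}$ whenever $u\neq 0$, so that the correction contributes at most one $\pm1$ difference rather than a larger sum. This is immediate from $u\in\F_{2^n}^*$. Thus the proof reduces to recording the substitution $t=1$ into the established bound.

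\begin{proof}
	This is the special case $t=1$ of Theorem \ref{thm.cyc}, obtained by taking $N=\{\xi\}$ and the single modifying function $F_1$. Substituting $t=1$ into the bound ${\rm DLU}_f\leq{\rm DLU}_F+2t$ of Theorem \ref{thm.cyc} yields ${\rm DLU}_f\leq{\rm DLU}_F+2$, as desired.
\end{proof}
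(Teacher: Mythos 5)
Your proposal is correct and matches the paper exactly: the paper derives Corollary \ref{cor-cyc1} as the immediate specialization $t=1$, $N=\{\xi\}$ of Theorem \ref{thm.cyc}, with no additional argument. Your optional self-contained unpacking of \eqref{eq-f2} with $s=1$ is also consistent with the theorem's proof.
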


By selecting $F(x)$ as the inverse function and the Kasami APN permutation, respectively, we present functions whose DLU is either identical to or closest to those of these two classes of power functions.

\begin{thm}
	Let $n=2m$ be a positive integer and $a,\,\xi\in\F_{2^n}$ satisfy $a\ne \xi^{2^n-2}$. Define
	\[f(x)=\left \{\begin{array}{ll}
		x^{2^n-2}, & {\rm if} \,\, x\in \F_{2^n}\backslash \{\xi\},\\
		a, &{\rm if} \,\, x=\xi.
	\end{array} \right.\]
The {\rm DLCT} of $f(x)$ at point $(u,v)\in\F_{2^n}^*\times \F_{2^n}^*$ is given by
\[{\rm DLCT}_f(u,v)=K_n(u^{-1}v)/2+2({\rm Tr}_1^n(v((\xi\!+\!u)^{-1}\!+\!\xi^{-1}))\!-\!{\rm Tr}_1^n(v((\xi\!+\!u)^{-1}\!+a))\!-\!{\rm Tr}_1^n(u^{-1}v)).\]
Moreover ${\rm DLU}_f\leq2^m+2$. In particular, ${\rm DLU}_f=2^m$ if $\xi= 0$.
\end{thm}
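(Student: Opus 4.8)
The plan is to realize $f$ as the single-point modification of the inverse function $F(x)=x^{2^n-2}$ at $x=\xi$ and feed this into the apparatus of Corollary \ref{cor-cyc1}. Taking $F_1(x)=a$ and $N=\{\xi\}$, and noting that $u\in\F_{2^n}^*$ forces $N(u)=\{\xi,\xi+u\}$ to have exactly two elements, formula \eqref{eq-f2} specializes (with $s=1$) to
\[
{\rm DLCT}_f(u,v)={\rm DLCT}_F(u,v)+\chi({\rm Tr}_1^n(v\Delta_f(\xi)))-\chi({\rm Tr}_1^n(v\Delta_F(\xi))),
\]
where $\Delta_f(\xi)=(\xi+u)^{2^n-2}+a$ and $\Delta_F(\xi)=(\xi+u)^{2^n-2}+\xi^{2^n-2}$. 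I would then insert Charpin's expression ${\rm DLCT}_F(u,v)=K_n(u^{-1}v)/2-1+(-1)^{{\rm Tr}_1^n(u^{-1}v)}$ from Proposition \ref{prop-inverse} and rewrite every $\chi({\rm Tr}_1^n(\cdot))=(-1)^{{\rm Tr}_1^n(\cdot)}=1-2{\rm Tr}_1^n(\cdot)$; collecting the resulting linear trace terms reproduces the stated formula. The bound ${\rm DLU}_f\le 2^m+2$ then drops out directly from Corollary \ref{cor-cyc1} together with ${\rm DLU}_F=2^{n/2}=2^m$ for even $n$ (Proposition \ref{prop-inverse}).

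For the sharper claim ${\rm DLU}_f=2^m$ when $\xi=0$, I would \emph{not} rely on the crude $\pm 2$ estimate underlying the general bound, but instead exploit a cancellation available only in this case. Setting $\xi=0$ gives $\xi^{2^n-2}=0$ and $(\xi+u)^{2^n-2}=u^{-1}$, so the subtracted correction character becomes $\chi({\rm Tr}_1^n(vu^{-1}))=(-1)^{{\rm Tr}_1^n(u^{-1}v)}$. This is exactly the Kloosterman sign term already present in Charpin's formula, and the two annihilate, leaving the clean identity
\[
{\rm DLCT}_f(u,v)=K_n(u^{-1}v)/2-1+(-1)^{{\rm Tr}_1^n(v(u^{-1}+a))}.
\]
In particular ${\rm DLCT}_f(u,v)$ equals either $K_n(u^{-1}v)/2$ or $K_n(u^{-1}v)/2-2$, according to the parity of ${\rm Tr}_1^n(v(u^{-1}+a))$.

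To conclude I would bound both candidates uniformly. Since $n=2m$ makes $\lfloor 2^{n/2+1}\rfloor=2^{m+1}\equiv 0\pmod 4$, Lemma \ref{lem.Kn1-2} gives $K_{\max}(n)=2^{m+1}$ and $K_{\min}(n)=4-2^{m+1}$, whence $K_n(u^{-1}v)/2\in[2-2^m,\,2^m]$ and $K_n(u^{-1}v)/2-2\in[-2^m,\,2^m-2]$; both intervals lie in $[-2^m,2^m]$, so $|{\rm DLCT}_f(u,v)|\le 2^m$ for all admissible $(u,v)$ and hence ${\rm DLU}_f\le 2^m$. For attainment I would choose $\gamma\in\F_{2^n}^*$ with $K_n(\gamma)=2^{m+1}$ (such $\gamma$ exists by Lemma \ref{lem.Kn1} and satisfies ${\rm Tr}_1^n(\gamma)=0$ by Lemma \ref{lem.Kn2}), pick $v\in\F_{2^n}^*$ with ${\rm Tr}_1^n(av)=0$, and set $u=v\gamma^{-1}$; since ${\rm Tr}_1^n(v(u^{-1}+a))={\rm Tr}_1^n(\gamma)+{\rm Tr}_1^n(av)=0$, this yields ${\rm DLCT}_f(u,v)=2^m$, so ${\rm DLU}_f=2^m$.

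The crux, and the reason $\xi=0$ is singled out, is precisely the cancellation step: the correction character produced by $\Delta_F(\xi)$ coincides with the Kloosterman sign term of Charpin's formula \emph{only} when $\xi^{2^n-2}=0$, so for $\xi=0$ the perturbation reduces to a single $\pm1$ sign rather than the generic uncontrolled $\pm 2$ shift. This is exactly what prevents the negative extreme $K_{\min}(n)$ from being driven down to $-2^m-2$ and thereby keeps the uniformity at $2^m$ instead of $2^m+2$. The only remaining technical point is the joint realizability of the trace condition ${\rm Tr}_1^n(v(u^{-1}+a))=0$ with a Kloosterman-maximizing $\gamma$, which is a routine counting argument secured by Lemmas \ref{lem.Kn1} and \ref{lem.Kn1-2}.
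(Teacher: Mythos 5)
Your proposal is correct and follows essentially the same route as the paper: specialize the single-point-modification formula \eqref{eq-f2} to $N=\{\xi\}$, insert Charpin's expression for ${\rm DLCT}$ of the inverse function, convert characters to traces, and then bound via $K_{\max}(n)=2^{m+1}$, $K_{\min}(n)=4-2^{m+1}$ with attainment secured by Lemmas \ref{lem.Kn1} and \ref{lem.Kn2}. Your handling of the $\xi=0$ case (bounding both candidate values $K_n(u^{-1}v)/2$ and $K_n(u^{-1}v)/2-2$ in $[-2^m,2^m]$) is a slightly cleaner presentation of the same cancellation the paper exploits, but it is not a different argument.
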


\begin{proof}
	In this case, $N(u)=\{\xi\}$ and then from \eqref{eq-f2} we can obtain
\[{\rm DLCT}_f(u,v)={\rm DLCT}_{F}(u,v)+ \chi( {\rm Tr}_1^n(v((\xi+u)^{-1}+a)))-\chi( {\rm Tr}_1^n(v((\xi+u)^{-1}+\xi^{-1}))).\]
Here $F(x)=x^{-1}$ and we define $0^{-1}=0$. From Proposition \ref{prop-inverse}, we know that
	\[{\rm DLCT}_F(u,v)=K_n(u^{-1}v)/2-1+\chi( {\rm Tr}_1^n(u^{-1}v)).\]
Using the identity $\chi({\rm Tr}_1^n(\alpha))=1-2{\rm Tr}_1^n(\alpha)$ for any $\alpha\in\F_{2^n}$, one gets
\begin{equation}\label{eq.DLCT-inv1}
	{\rm DLCT}_f(u,v)=K_n(u^{-1}v)/2+2({\rm Tr}_1^n(v((\xi\!+\!u)^{-1}\!+\!\xi^{-1}))\!-\!{\rm Tr}_1^n(v((\xi\!+\!u)^{-1}\!+a))\!-\!{\rm Tr}_1^n(u^{-1}v)).
\end{equation}
Denote $\theta_1=(\xi\!+\!u)^{-1}\!+\!\xi^{-1}$ and $\theta_2=(\xi\!+\!u)^{-1}\!+a$. Since $a\ne \xi^{-1}$, it follows that $\theta_1\ne \theta_2$. From Lemma \ref{lem.Kn1-2} and Lemma \ref{lem.Kn2}, we deduce that
the maximum and minimum values of $K_n(u^{-1}v)$ are $2^{m+1}$ and $4-2^{m+1}$, respectively, with ${\rm Tr}_1^n(u^{-1}v)=0$ and ${\rm Tr}_1^n(u^{-1}v)=1$ in each case. Observe that
$$\max_{u,v\in\F_{2^n}^*}|K_n(u^{-1}v)/2-{2 }{\rm Tr}_1^n(u^{-1}v)|=2^m$$ in either case. Combining with $|2({\rm Tr}_1^n(\theta_1 v)-{\rm Tr}_1^n(\theta_2 v))|=0$ or $2$, equation \eqref{eq.DLCT-inv1} yields
$$\max_{u,v\in\F_{2^n}^*}|{\rm DLCT}_f(u,v)|\leq 2^{m}+2.$$ Hence, we conclude that ${\rm DLU}_F\leq 2^{m}+2$.
Specially, we claim that ${\rm DLU}_F= 2^{m}$ when $\xi=0$. In the case where $\xi=0$, equation \eqref{eq.DLCT-inv1} simplifies to
\[{\rm DLCT}_f(u,v)=K_n(u^{-1}v)/2-2{\rm Tr}_1^n(u^{-1}v+av).\]
Using the value of $K_n(u^{-1}v)$, we obtain
$\max|{\rm DLCT}_f(u,v)|=2^m$ when $K_n(u^{-1}v)=2^{m+1}$ and ${\rm Tr}_1^n(u^{-1}v+av)={\rm Tr}_1^n(av)=0$ because ${\rm Tr}_1^n(u^{-1}v)=0$.
Clearly, there exist $u,v\in\F_{2^n}^*$ such that this situation occurs due to $\#\{v\in\F_{2^n}^*: {\rm Tr}_1^n(av)=0 \}>0$ and Lemma \ref{lem.Kn1}.
Thus ${\rm DLU}_F= 2^{m}$. This completes the proof.
\end{proof}

\begin{example}\label{eq.inv2}
	Let $n=8$ and $w$ be a primitive element of $\F_{2^8}$. Then the {\rm DLCT} {spectrum} of the function
	\[f(x)=\left \{\begin{array}{ll}
		x^{-1}, & {\rm if} \,\, x\in \F_{2^8}^*,\\
		w, &{\rm if} \,\, x=0
	\end{array} \right.\]
	{is given by} Table \ref{tab-DLCT-inverse1} and ${\rm DLU}_f=16$.
 \begin{table}[ht]
	\caption{The {\rm DLCT} {spectrum} of $f$ in Example \ref{eq.inv2}} \label{tab-DLCT-inverse1}
	\vspace{-2mm}
	\begin{center}{
			\begin{tabular}{llllllllll}
				\hline
				${\rm DLCT}_f(u,v)$ &-16 &  -14 & -12 & -10 & -8 & -6 & -4 & -2 & 0  \\
				Multiplicity & 1016 & 3072 & 3048 & 3328 & 5334 & 5120 & 6096  & 6144  & 4064\\ \hline
				${\rm DLCT}_f(u,v)$ &	 2 & 4 & 6 & 8 & 10 & 12 & 14 &16   \\
				Multiplicity &  4608&4572 & 4096 & 4064 & 4608 & 3556 & 1664 & 635 \\ \hline
		\end{tabular}}
	\end{center}
\end{table}	
\end{example}

\begin{remark}
	It can be verified by Magma that the {\rm DLCT} {spectrum} of the inverse function over $\F_{2^8}$ takes exactly the values as in Table \ref{tab-DLCT-inverse2}.
\begin{table}[ht]
		\caption{The {\rm DLCT} {spectrum} of $x^{-1}$ over $\F_{2^8}$ }\label{tab-DLCT-inverse2}
		\vspace{-2mm}
		\begin{center}{
				\begin{tabular}{llllllllll}
					\hline
					${\rm DLCT}_f(u,v)$ &-16 &   -12 &  -8 &  -4  & 0 & 4 &  8 &  12  &16 \\
					Multiplicity & 2040 & 6120 & 10710 & 12240 & 8160 & 9180 & 8160  & 7140  & 1275\\ \hline
			\end{tabular}}
		\end{center}
	\end{table}	
Observe that the {\rm DLCT} spectrum of $f(x)$ in Example \ref{eq.inv2} is different from that of inverse function, which infers that they are in-equivalent.
{We further compare additional cryptographic properties of the two functions in Table \ref{tab-properties}, including nonlinearity, differential uniformity (DU), and boomerang uniformity (BU). Although these indicators are identical, experimental data reveal significant differences in their Walsh spectra, differential spectra, and boomerang spectra.
}\vspace{-2mm}
	 \begin{table}[ht]
		{\caption{The cryptographic properties of $f(x)$ and $x^{-1}$ over $\F_{2^8}$ }\label{tab-properties}
		\vspace{-2mm}
		\begin{center}{
				\begin{tabular}{llllllllll}
					\hline
					Function & Nonlinearity &  DU & BU &  DLU   \\  \hline
					$f(x)$ & 112 & 4 & 6 & 16  \\ \hline
$x^{-1}$ & 112 & 4 & 6 & 16 \\ \hline
			\end{tabular}}
		\end{center}}
\vspace{-4mm}
	\end{table}
\end{remark}

From Proposition \ref{prop-kasami} and Corollary \ref{cor-cyc1}, we obtain the following result directly.
\begin{thm}\label{thm-kasami}
	Let $n$ be odd, not divisible by $3$, and $3k\equiv \pm 1\,(\,{\rm mod} \, n)$. Define
\[f(x)=\left \{\begin{array}{ll}
		x^{2^{2k}-2^k+1}, & {\rm if} \,\, x\in \F_{2^n}\backslash \{\xi\},\\
		a, &{\rm if} \,\, x=\xi,
	\end{array} \right.\]
where $a,\xi\in\F_{2^n}$ and $a\ne \xi^{2^{2k}-2^k+1}$. Then ${\rm DLU}_f\leq2^{(n-1)/2}+2$.
\end{thm}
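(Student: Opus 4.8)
The plan is to recognize $f$ as a single-point modification of the Kasami APN permutation and then invoke the two cited results in sequence. First I would set $F(x)=x^{2^{2k}-2^k+1}$ and observe that, under the standing hypotheses that $n$ is odd, not divisible by $3$, and $3k\equiv\pm1\,({\rm mod}\,n)$, Proposition \ref{prop-kasami} gives exactly ${\rm DLU}_F=2^{(n-1)/2}$. This supplies the base DLU value that the rest of the argument perturbs.

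Next I would note that $f$ coincides with $F$ on $\F_{2^n}\backslash\{\xi\}$ and takes the value $a$ at the single point $\xi$. This is precisely the configuration of Corollary \ref{cor-cyc1}, applied with base function $F$ and with the modified value supplied by any $(n,n)$-function $F_1$ satisfying $F_1(\xi)=a$ (for instance the constant map $F_1\equiv a$). The hypothesis $a\ne\xi^{2^{2k}-2^k+1}$ merely guarantees that $f$ is a genuine modification of $F$ at $\xi$; it is not needed for the inequality, which would hold trivially when $a=\xi^{2^{2k}-2^k+1}$ since in that case $f=F$. Applying Corollary \ref{cor-cyc1} then yields ${\rm DLU}_f\leq {\rm DLU}_F+2$, and substituting the value from Proposition \ref{prop-kasami} gives ${\rm DLU}_f\leq 2^{(n-1)/2}+2$, as claimed.

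There is essentially no obstacle here: the statement is an immediate corollary obtained by feeding a known DLU value into the single-point perturbation bound of Corollary \ref{cor-cyc1} (the $t=1$ case of Theorem \ref{thm.cyc}). The only point that deserves a word of confirmation is that the two modified entries $x=\xi$ and $x=\xi+u$ appearing in the set $N(u)$ of Theorem \ref{thm.cyc} together contribute a perturbation of absolute value at most $2$ to each DLCT entry; this is exactly the estimate already carried out in the proof of Theorem \ref{thm.cyc}, so no new computation is required.
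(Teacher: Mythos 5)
Your proposal is correct and follows exactly the paper's route: the paper derives this theorem directly from Proposition \ref{prop-kasami} (which gives ${\rm DLU}_F=2^{(n-1)/2}$ for the Kasami permutation) combined with the single-point modification bound of Corollary \ref{cor-cyc1}. Your additional remark that the hypothesis $a\ne\xi^{2^{2k}-2^k+1}$ only serves to make the modification genuine is a fair observation and does not change the argument.
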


\begin{example}\label{eq.kasami}
	Let $n=7$, $k=5$, then $n,k$ satisfy $3k\equiv 1\,(\,{\rm mod} \, n)$. Define
	\[f(x)=\left \{\begin{array}{ll}
		x^{104}, & {\rm if} \,\, x\in \F_{2^7}^*,\\
		w, &{\rm if} \,\, x=0,
	\end{array} \right.\]
	where $w$ is a primitive element of $\F_{2^7}$. Using Magma, it is verified that ${\rm DLU}_f=10$, which is consistent with Theorem \ref{thm-kasami}.
\end{example}

\begin{remark}
Based on the experimental data, we have not found parameters for which the function $f(x)$ in Theorem \ref{thm-kasami} satisfies ${\rm DLU}_f=2^{(n-1)/2}$. Whether it is possible to construct a function $f(x)$, as defined in Theorem \ref{thm.cyc}, with ${\rm DLU}_f=2^{(n-1)/2}$ by selecting appropriate branch functions deserves further investigation.
\end{remark}

{As discussed above, when $C_i=1$, the cyclotomic mappings defined in Theorem \ref{thm.cyc} can effectively construct functions with low DLU, provided that $F(x)$ is modified at only a few points. However, as the number of modified points increases, the bound on DLU may grow significantly.}
{To broaden the scope, we further explore non-trivial generalized cyclotomic mappings where $C_i$ consists of multiple elements.}
{In this regard, we present a concrete example with $d=q+1$ and $C_i=\F_q$, which leads to a function with low DLU.}

\begin{example}
{Let $w$ be a primitive element of $\F_{2^n}$, where $n=2m$ and $q=2^m$. Then the {\rm DLU} of the function
\[f(x)=\left \{\begin{array}{ll}
x^{-1}, & {\rm if} \,\, x\in \F_{2^n}\backslash\F_q,\\
w^{q+1} x^{-1}, &{\rm if} \,\, x\in\F_q
\end{array} \right.\]
takes exactly the values as in Table \ref{tab-DLCT-inverse3}.}
\vspace{-4mm}
\begin{table}[ht]
		{\caption{The {\rm DLU} of $f(x)$ over $\F_{2^{2m}}$ for $2\leq m\leq 6$ }\label{tab-DLCT-inverse3}
		\vspace{-2mm}
		\begin{center}{
				\begin{tabular}{lllllllll}
					\hline
					$m$ &  2 & 3 & 4 & 5& 6    \\
					${\rm DLU}_f$ & 8 & 12 & 28& 52 & 96 \\ \hline
			\end{tabular}}
		\end{center}}
	\end{table}	
\end{example}

It is evident that non-trivial generalized cyclotomic mappings can be employed to construct functions with low DLU. However, we have not yet discovered any functions achieving optimal or near-optimal DLU via these mappings. Identifying non-trivial generalized cyclotomic mappings that yield optimal or near-optimal DLU remains a promising direction for future research.

\section{Conclusion}\label{conc}
The concept of the DLCT was introduced by Bar-On et al. in 2019, offering a more accurate complexity analysis of the differential-linear attack. However, only a few studies have been conducted on the DLCT of cryptographic functions, especially for $(n,n)$-functions.
This paper investigated the DLCT of several infinite families of $(n,n)$-functions, including power functions and polynomials. We began by examining the DLCT of two classes of power functions: cubic functions and Dillon-type functions, and provided upper bounds for their DLU, identifying some power functions with low DLU. Next, using cubic and quadratic functions, we constructed a class of $(n,n)$-functions with low DLU.
Additionally, we leveraged the inverse function and the Kasami APN permutation, both of which have the currently known optimal DLU, to derive several classes of $(n,n)$-functions using generalized cyclotomic mappings, whose DLU is either equal to or very close to that of these functions.


\section*{Acknowledgements}
This work was supported in part by the Major Program(JD) of Hubei Province under Grant 2023BAA027, in part by the National Natural Science Foundation of China under Grant 12471492 and Grant 12271145, in part by the Innovation Group Project of the Natural Science Foundation of Hubei Province of China under Grant 2023AFA021, and in part by the Natural Sciences and Engineering Research Council of Canada under Grant RGPIN-2023-04673. The authors want to thank Jeong, Koo and Kwon for pointing out that reference \cite{Jeong}, which contains relevant results but was unintentionally omitted in the published article. As a result, we updated Table \ref{DLU-table} and references in this version.

\end{document}